\newtheorem{definition}{Definition}[section]
\newtheorem{remark}{Remark}[section]
\newtheorem{lemma}{Lemma}[section]
\newtheorem{theorem}{Theorem}[section]
\newenvironment{proof}{\par\noindent{\itshape Proof.}}{\hfill$\square$\par\medskip}
\begin{document}

\title{Topological Clusters in Multi-Agent Networks: Analysis and Algorithm}

\author{Jeong-Min Ma, Hyung-Gon Lee, Kevin L. Moore, Hyo-Sung Ahn, and Kwang-Kyo Oh
% \thanks{ }
\thanks{J.-M. Ma, H.-S. Ahn and H.-G. Lee are with School of Mechanical Engineering, Gwangju Institute of Science and Technology, Gwangju, Republic of Korea(e-mail: majm1128@gist.ac.kr, hyosung@gist.ac.kr), hyunggohnlee@gist.ac.kr. }
\thanks{K.-K. Oh is with Department of Electrical Engineering, Sunchon National University, Sunchon, Republic of Korea (e-mail: kkoh@scnu.ac.kr).}
% \thanks{Y.-H. Lim is with Department of Convergence Electronic Engineering, Gyeongsang % National University, Jinju, Republic of Korea (e-mail: yhunlim@gnu.ac.kr).}
\thanks{K. L. Moore is with Department of Electrical Engineering, Colorado School of Mines, Golden, CO, USA (e-mail: kmoore@mines.edu).}}
\maketitle

\begin{abstract}
We study clustering properties of networks of single-integrator nodes over a directed graph, in which the nodes converge to steady-state values. These values define clustering groups of nodes, which depend on interaction topology, edge weights, and initial values. Focusing on the interaction topology of the network, we introduce the notion of topological clusters, which are sets of nodes that converge to an identical value due to the topological characteristics of the network, independent of the value of the edge weights. We then investigate properties of topological clusters and present a necessary and sufficient condition for a set of nodes to form a topological cluster. We also provide an algorithm for finding topological clusters. Examples show the validity of the analysis and algorithm.
\end{abstract}

\begin{IEEEkeywords}
Topological clusters, multi-agent systems, consensus
\end{IEEEkeywords}

%%%%%%%%%%%%%%%%%%%%%%%%%%%%%%%%%%%%%%%%%%%%%%%%%%%%%%%%%%%%%%%%%%%%%%%%%%%%%%%%%%%%%%%%%%%
\section{Introduction}
Multi-agent systems have attracted a significant amount of research interest due to their theoretical challenges and richness of applications \cite{olfatisaber2007consensus,ren2007information,oh2015survey}. In particular, directed networks have received considerable attention \cite{8444704,9866554}. In directed networks, agents often tend to aggregate into clusters depending on factors such as the values of the network's edge weights, node dynamics, and the topology of network.
There have been many studies of clustering behavior \cite{yu2010group,yu2012group,6760299,xia2011clustering,yi2011reaching,MONACO201953}. In \cite{yu2010group,yu2012group,6760299}, group consensus has been investigated in directed networks with time-varying topology, communication delays, or negative couplings. The authors of \cite{yu2010group,yu2012group} have provided algebraic conditions for clustering by dividing a network into several sub-networks. In \cite{6760299}, clustering behavior has been investigated for a network with some negative couplings. The authors of \cite{xia2011clustering} have studied clustering in networks with heterogeneous node dynamics, communication delay, or negative couplings. The authors of \cite{yi2011reaching} have studied the relationship between clustering and network topology. They have shown how the number of clusters can be determined by the Laplacian of a directed graph. The authors of \cite{MONACO201953} have studied clustering in networks with equal edge weights, showing that clusters correspond to cells of what is called the almost equitable graph partition. 

Identification of clusters has been studied in \cite{MONACO201953} for a network of single-integrator nodes. The authors of \cite{MONACO201953} have presented a method to identify clusters of a single-integrator network with equal edge weights. The method in \cite{MONACO201953} can be extended to a network with positive edge weights. However, since the method relies on algebraic properties of the graph Laplacian, it is difficult to apply when edge weights are not exactly defined or change frequently. In other words, the analysis in \cite{MONACO201953} is dependent on edge weights. Here, we study purely topological characteristics of clustering behavior, independent of edge weights of networks. 
Specifically, we introduce the notion of topological clusters defined independently of edge weights for a network of single-integrator nodes operating over a directed graph. Then, we present an algorithm to identify topological clusters of the network. %Though we consider a single-integrator network, the result of this paper can be applied to general multi-agent system models.

We comment that the notion of topological clusters is useful for analyzing and controlling networks when edge weights are not exactly defined or change frequently. Further, knowledge of topological clusters can become an adequate theoretical basis for various control problems related to cluster behavior \cite{CACACE2021109734,8908749,9424396}. For example, the topological clusters associated with thermal dynamics of a building can be the basis for identifying suitable input-output pairing of sensors and actuators in a multi-zone control scheme \cite{6387350}. Also, knowledge of topological clusters in a social network can be used to analyze the behavior of opinion dynamics and the propagation of opinions \cite{090766188,Mas2010Individualization,dietrich2017control}. In addition, the notion of topological clusters can be applied to swarm control of UAVs \cite{s21113820}.
%bird flocks in nature often fly in cluster formation as discussed in \cite{8792443}.

Consequently, the contributions of this paper can be summarized as follows:
First, we introduce the notion of topological clusters and present a necessary and sufficient condition for a set of nodes to form a topological cluster. Our notion of topological clusters is independent of edge weights, as opposed to \cite{yi2011reaching,MONACO201953}. That is, we always obtain the same topological clusters for any positive edge weights. Second, we propose an algorithm to identify topological clusters from the interaction topology of the network. The algorithm can be applied to social networks such as DeGroot model \cite{degroot1974reaching}. A preliminary, and partial, version of this algorithm is found in \cite{ma2020clusters}. Finally, we validate the proposed algorithm based on a real world example. This shows that the proposed algorithm can be useful for analyzing real world networks.

% 이후 부분의 논문의 구성을 언급한다. (이 단락은 별 내용이 없어서 제가 작성했습니다. 나중에 section 구성이 달라지면 그에 맞게 수정하기 바랍니다.) \\
The outline of this paper is as follows. %In Section~\ref{sec_preliminaries}, graph theory is reviewed and the existing results on consensus are summarized.
Following preliminaries in the next section, topological clusters are defined and their properties are investigated in Section~\ref{sec_clusters}. In Section~\ref{sec_clustering_algorithm}, an algorithm for finding topological clusters is proposed. In Section~\ref{sec_examples}, the proposed algorithm is applied to some examples, including a social network, and compared with the existing works. Concluding remarks are then provided in Section~\ref{sec_conclusion}.

%%%%%%%%%%%%%%%%%%%%%%%%%%%%%%%%%%%%%%%%%%%%%%%%%%
\section{Preliminaries} \label{sec_preliminaries}
%%%%%%%%%%%%%%%%%%%%%%%%%%%%%%%%%%%%%%%%%%%%%%%%%%
% \subsection{Notations}
% The following notations are used throughout the paper. 
The set of real numbers is denoted by $\mathbb{R}$. A zero matrix is denoted by $\mathbf{0}$. A $n$-dimensional vector with ones is denoted by $\mathbf{1}_{n}$. The cardinality of a set $\mathcal{A}$ is denoted by $|\mathcal{A}|$. Given two sets, $\mathcal{A}$ and $\mathcal{B}$, % the difference $\mathcal{A} - \mathcal{B}$ is denoted by 
$\mathcal{A} \backslash \mathcal{B}$ denotes the set of elements that are in $\mathcal{A}$ but not in $\mathcal{B}$.

% The $n \times n$ identity matrix is denoted by $I_{n}$.  For $A \in \mathbb{R}^{n \times n}$, we denote the positive definiteness (respectively, positive semi-definiteness) of $A$ by $A \succ 0$ (respectively, $A \succeq 0$). Further, by $A \prec 0$ (respectively, $A \preceq 0$), we denote the negative definiteness (respectively, negative semi-definiteness) of $A$. For any $A \in \mathbb{R}^{N \times N}$, $A^{T}$ denotes the transpose of $A$. 

%%%%%%%%%%%%%%%%%%%%%%%%%%%%%%%%%%%%%%%%%%%%%%%%%%%%%%%%%%%%%%%%%%%%%%%%%%%%%%%%
%\subsection{Graph theory}
%%%%%%%%%%%%%%%%%%%%%%%%%%%%%%%%%%%%%%%%%%%%%%%%%%%%%%%%%%%%%%%%%%%%%%%%%%%%%%%%
%Graph theory is useful for modeling interaction topologies among agents in consensus networks. 
A directed graph (or graph) $\mathcal{G} = (\mathcal{V}, \mathcal{E}, A)$ is defined as a triple consisting of a node set $\mathcal{V} = \{ 1, \ldots, N \}$, an edge set $\mathcal{E} \subset \{ (i, j): i, j \in \mathcal{V}, i \not =  j \}$, where an edge is an ordered pair of distinct nodes in $\mathcal{V}$, and a weight matrix $A = [a_{ij}] \in \mathbb{R}^{N \times N}$, which is a nonnegative matrix such that $a_{ij} > 0$ if and only if $(j,i) \in \mathcal{E}$. The weight matrix is also called the adjacency matrix of $\mathcal{G}$. If $a_{ij} = a_{ji}$, $\mathcal{G}$ is called undirected. If $(i, j)$ is an edge of $\mathcal{G}$, $i$ is a parent node of $j$ and $j$ is a child node of $i$. The set of neighbors of node $i$ in $\mathcal{G}$ is defined as $\mathcal{N}_{i} = \{ j: (j,i) \in \mathcal{E} \}$. % The in-valency of node $i$ is the number of nodes in $\mathcal{N}_{i}$.

A subgraph $\mathcal{G}_{s} = (\mathcal{V}_{s}, \mathcal{E}_{s})$ of $\mathcal{G}$ is a graph such that $\mathcal{V}_{s} \subseteq \mathcal{V}$ and $\mathcal{E}_{s} \subseteq \mathcal{E}$. If $\mathcal{V}_{s} = \mathcal{V}$, $\mathcal{G}_{s}$ is called a spanning subgraph of $\mathcal{G}$. 
% {\color{cyan}If $\mathcal{E}_{s}$ consists of all of the edges in $\mathcal{E}$ that have both endpoints in $\mathcal{V}_{s}$,} $\mathcal{G}_{s}$ is called an induced subgraph of $\mathcal{G}$. 
Similarly, the induced subgraph is a graph whose vertex set is $\mathcal{V}_{s}$ and whose edge set consists of all of the edges in $\mathcal{E}$ that have both endpoints in $\mathcal{V}_{s}$. It is also said that $\mathcal{G}_{s}$ is induced by $\mathcal{V}_{s}$. The subgraph induced by $\mathcal{V}_{s}$ is denoted by $\mathcal{G}[\mathcal{V}_{s}]$. A tree is a graph such that every node except for a node called the root has exactly one parent. A forest is a graph consisting of one or more trees, no two of which have a node in common. A spanning tree (spanning forest) of $\mathcal{G}$ is a tree (forest) that is a spanning subgraph of $\mathcal{G}$. A path in $\mathcal{G}$ is a sequence $i_{1},\ldots,i_{k}$ of nodes such that $(i_{s}, i_{s-1}) \in \mathcal{E}$ for $s = 1,\ldots,k-1$. If there is a path from $i$ to $j$, we say $j$ can be reached from $i$. A weak path in $\mathcal{G}$ is a sequence $i_{1},\ldots, i_{k}$ of nodes such that either $(i_{s}, i_{s-1}) \in \mathcal{E}$ or $(i_{s-1}, i_{s}) \in \mathcal{E}$  for $s = 1,\ldots,k-1$. A graph $\mathcal{G}$ is strongly connected (weakly connected) if there exists a path (weak path) from $i$ to $j$ for any distinct nodes $i,j \in \mathcal{G}$. 

% Directed graph를 기본으로 하는 논문들도 많으니 따로 아래 내용을 표기하지는 않아도 될 것 같습니다.
% {\color{red}
% \begin{remark} \label{directed_skip}
% In this paper, only directed graphs are dealt with, and all graph elements (tree, path, etc) appearing later are also directed. However, for convenience, the word `directed' is omitted.
% \label{directed_skip}
% \end{remark}
% }

%  %{\color{cyan} 기존 Lemma 2.2를 여기로 옮겼습니다. 수정된 사항 확인 바랍니다. 이 lemma를 잘 설명해줄만한 그래프 하나를 lemma 아래쪽에 추가하면 어떨까요? 간단해 보이는 lemma지만 첫눈에 이해하기는 어려운 것 같으니 적절한 예시가 있으면 좋을 듯 합니다.}
%  The following lemma is obtained: %  from basic notions on graphs:
% % %%%%%%
% \begin{lemma} \label{lema_root_sc}
% For a graph with multiple directed spanning trees, the subgraph induced by the set of the root nodes is strongly connected.
% %If a graph $\mathcal{G}$ has multiple directed spanning trees, the set of every root node is a strongly connected component.
% \end{lemma}
% \begin{proof}
% Consider any pair of root nodes of the graph. By definition, each root node has a directed path to the other root node. Notice that every node in the directed paths between the root nodes is also the root of a directed spanning tree. Therefore the subgraph induced by the set of the root nodes is strongly connected.
% \end{proof}
% %%%%%

% A weighted graph is a graph $\mathcal{G}$ with a nonnegative weight matrix $A = [a_{ij}] \in \mathbb{R}^{N \times N}$ such that $(i, j) \in \mathcal{E}$ if and only if $a_{ji} > 0$.
The Laplacian matrix $L = [l_{ij}] \in \mathbb{R}^{N \times N}$ of $\mathcal{G}$ is defined as
\begin{align}
l_{ij} = \left \{ \begin{array}{ll}
\sum_{j \in \mathcal{N}_{i}} a_{ij}, & i=j, \\
-a_{ij},  & (j,i) \in \mathcal{E},  \\
0, & \mathrm{otherwise}.
\end{array}	\right.	\nonumber
\end{align}
% The in-degree and out-degree of node $i$ in $\mathcal{G}$ are, respectively, defined as
% \begin{align*}
% {\degree}_{in}(i) = \sum_{j \in \mathcal{N}_{i}}a_{ij}, \; {\degree}_{out}(i) = \sum_{j \in \mathcal{N}_{i}}a_{ji}.
% \end{align*}
% The degree matrix of $\mathcal{G}$ is defined as $D = [d_{ij}] \in \mathbb{R}^{N \times N}$, where $d_{ij} = {\degree}_{in}(i)$ for $i = j$, otherwise, $d_{ij} = 0$. Then the Laplacian matrix of $\mathcal{G}$ can be written as $L = D - A$.
%%%%
%%%%
The following lemmas are useful for analyzing consensus networks:
\begin{lemma} [\cite{godsil2001algebraic}] \label{lemma_Laplacian_1} 
For the Laplacian matrix of a graph, the following is true:
\begin{itemize}
    \item All the eigenvalues of the Laplacian matrix have nonnegative real parts;
    \item Zero is an eigenvalue of the Laplacian matrix with $\mathbf{1}_{N}$ as the corresponding right eigenvector.
\end{itemize} 
\end{lemma}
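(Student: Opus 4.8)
The plan is to read off both properties directly from the row structure of $L$. First I would establish the second bullet, since it is the easier one and it also pins down the relevant eigenvalue for later. From the definition of $L$, each row $i$ has diagonal entry $l_{ii} = \sum_{j \in \mathcal{N}_{i}} a_{ij}$ and off-diagonal entries $l_{ij} = -a_{ij}$ exactly for $j \in \mathcal{N}_{i}$ (and $0$ otherwise). Summing the entries of row $i$ therefore gives $\sum_{j} l_{ij} = \sum_{j \in \mathcal{N}_{i}} a_{ij} - \sum_{j \in \mathcal{N}_{i}} a_{ij} = 0$. In matrix form this is precisely $L \mathbf{1}_{N} = \mathbf{0}$, so $\mathbf{1}_{N}$ is a right eigenvector of $L$ associated with the eigenvalue $0$, which is the second bullet.

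For the first bullet, the approach I would take is a localization argument via the Gershgorin disk theorem. The key observation is that, by the computation above, the diagonal entry of each row equals the sum of the absolute values of the off-diagonal entries in that row: since $a_{ij} \ge 0$, we have $l_{ii} = \sum_{j \in \mathcal{N}_{i}} a_{ij} = \sum_{j \ne i} |l_{ij}|$. Hence the $i$-th Gershgorin disk is centered at the nonnegative real number $l_{ii}$ and has radius exactly $R_{i} = l_{ii}$, so it is the closed disk $\{ z \in \mathbb{C} : |z - l_{ii}| \le l_{ii} \}$. Each such disk is tangent to the imaginary axis at the origin and lies entirely in the closed right half-plane $\{ z \in \mathbb{C} : \mathrm{Re}(z) \ge 0 \}$.

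Finally, I would invoke the Gershgorin disk theorem to conclude that every eigenvalue of $L$ lies in the union of these disks; since each disk is contained in the closed right half-plane, every eigenvalue has nonnegative real part. This completes both bullets.

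I do not anticipate a genuine obstacle, as this is a classical fact; the only point requiring care is verifying cleanly that the Gershgorin radius equals the diagonal entry, which rests entirely on the sign structure of $L$ (off-diagonal entries nonpositive, diagonal entry equal to the weighted in-degree). If one instead wanted a self-contained argument avoiding Gershgorin, an alternative would be to exploit that $L$ has nonnegative diagonal, nonpositive off-diagonals, and zero row sums, writing $L = D - A'$ for a suitable nonnegative matrix $A'$ and applying Perron--Frobenius-type spectral bounds; but the Gershgorin route is the most direct.
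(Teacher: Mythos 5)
Your proposal is correct: the zero row-sum computation gives $L\mathbf{1}_{N} = \mathbf{0}$ directly, and the Gershgorin localization (each disk centered at $l_{ii}\ge 0$ with radius exactly $l_{ii}$, hence contained in the closed right half-plane) is sound, including the key sign-structure check that $\sum_{j\neq i}|l_{ij}| = \sum_{j\in\mathcal{N}_{i}} a_{ij} = l_{ii}$. The paper itself gives no proof---it cites the result from the literature---and your Gershgorin argument is precisely the standard proof found in those references, with the added merit that it works verbatim for directed graphs, which is the setting the paper actually needs.
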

%%%%%
%%%%%
\begin{lemma} [\cite{lin2005necessary,ren2005consensus}] \label{lemma_Laplacian_2}
For the Laplacian matrix of a graph, zero is a simple eigenvalue if and only if the graph has a directed spanning tree.
\end{lemma}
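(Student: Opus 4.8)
The plan is to relate the algebraic multiplicity of the zero eigenvalue of $L$ to the strongly connected component (SCC) structure of $\mathcal{G}$ and then to show that this multiplicity equals one exactly when a directed spanning tree exists. First I would decompose $\mathcal{V}$ into SCCs $C_{1},\ldots,C_{m}$ and form the condensation, which is a directed acyclic graph on these components. I call an SCC a \emph{source} if no edge enters it from another SCC, equivalently if every in-neighbor of every node of the component lies inside the component. Taking a topological order of the condensation and relabeling the nodes accordingly, $L$ becomes block lower triangular, since $l_{ij} \neq 0$ for $i \neq j$ requires $(j,i) \in \mathcal{E}$, i.e. an influence edge from the column component into the row component. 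Consequently $\det(\lambda I - L) = \prod_{a=1}^{m} \det(\lambda I - L_{aa})$, where each diagonal block $L_{aa}$ is the principal submatrix of $L$ indexed by $C_{a}$, so the spectrum of $L$, counted with algebraic multiplicity, is the union of the spectra of the diagonal blocks.

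Next I would analyze the diagonal blocks separately. For a node $i \in C_{a}$ the within-block row sum of $L_{aa}$ equals $\sum_{j \in \mathcal{N}_{i} \setminus C_{a}} a_{ij} \ge 0$, which is strictly positive exactly when $i$ has an in-neighbor outside $C_{a}$. If $C_{a}$ is a source, all these sums vanish, so $L_{aa}$ is precisely the Laplacian of the strongly connected induced subgraph $\mathcal{G}[C_{a}]$: its off-diagonal entries are nonpositive, its row sums are zero, and it is irreducible. By Lemma~\ref{lemma_Laplacian_1} its eigenvalues have nonnegative real part and zero is an eigenvalue with eigenvector $\mathbf{1}$, and a Perron--Frobenius argument applied to the nonnegative irreducible matrix $sI - L_{aa}$ (for large $s$) shows that this zero eigenvalue is simple and all other eigenvalues have strictly positive real part. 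If $C_{a}$ is not a source, then at least one within-block row sum is strictly positive while the rest are nonnegative; since $L_{aa}$ is irreducible with nonpositive off-diagonals, it is irreducibly diagonally dominant with at least one strict inequality and is therefore nonsingular. Combined with the block-triangular factorization, this shows that the algebraic multiplicity of the zero eigenvalue of $L$ equals the number of source SCCs.

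Finally I would connect the source count to the existence of a directed spanning tree. A directed spanning tree exists iff some root node can reach every other node; since such a root must reach every SCC of the condensation, its own component reaches all others, and in an acyclic condensation a component that reaches all others is necessarily the unique source, because two distinct sources are mutually unreachable. Hence $\mathcal{G}$ has a directed spanning tree iff the condensation has exactly one source SCC, which by the preceding paragraph is exactly the condition that zero be a simple eigenvalue of $L$. This completes both directions.

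I expect the main obstacle to be the two spectral claims about the diagonal blocks: establishing simplicity of the zero eigenvalue of a source block through irreducibility and the Perron--Frobenius theorem, and establishing nonsingularity of a non-source block through irreducible diagonal dominance. Once the SCC decomposition is in place, the block-triangular reduction and the graph-theoretic equivalence between a unique source and a spanning tree are essentially bookkeeping.
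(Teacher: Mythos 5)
Your proposal is correct, but note that the paper itself never proves Lemma~\ref{lemma_Laplacian_2}: it is imported by citation from \cite{lin2005necessary,ren2005consensus}, so there is no in-paper argument to compare against. What you have written is a standard, self-contained structural proof, and it actually establishes something stronger than the stated lemma, namely that the algebraic multiplicity of the zero eigenvalue equals the number of source strongly connected components of $\mathcal{G}$; the lemma is the special case where that count is one. This is closer in spirit to the condensation machinery the paper does use elsewhere (the discussion surrounding Lemma~\ref{lemma_partitioning_G}, where a graph without a spanning tree is partitioned via source components $\mathcal{A}$ and $\mathcal{B}$ of the condensation) than to the original proofs in the cited references, which argue via rank/kernel considerations and properties of nonnegative matrices rather than via an explicit block-triangular spectral decomposition. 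Your route buys more: the multiplicity formula is exactly what underlies the paper's later claim (via \cite{yi2011reaching}) that the number of clusters is determined by the Laplacian, whereas the citation-only route buys brevity. Two small points to tighten if you write this out in full: (i) in the final equivalence you only argue the direction ``spanning tree $\Rightarrow$ unique source''; the converse needs the (easy) fact that in a finite acyclic condensation every component is reachable from some source, so a unique source reaches everything; (ii) the Perron--Frobenius and Taussky (irreducible diagonal dominance) steps require $|C_a| \ge 2$ for irreducibility to be meaningful, so the $1 \times 1$ blocks (leader nodes and single-node non-source components) should be checked separately, which is immediate since they are $[0]$ and $[d]$ with $d>0$, respectively.
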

%%%%%
%%%%%
\begin{lemma} [\cite{niezink2011consensus}] \label{lemma_partitioning_G}
If a graph $\mathcal{G} = (\mathcal{V},\mathcal{E}, A)$ has no directed spanning tree, its nodes can be partitioned into $\mathcal{A}$, $\mathcal{B}$, and $\mathcal{C}$ such that
\begin{itemize}
    \item The node sets $\mathcal{A}$ and $\mathcal{B}$ are nonempty and $\mathcal{G}[\mathcal{A}]$ is strongly connected;
    \item No nodes in $\mathcal{A}$ and $\mathcal{B}$ have incoming edges from $\mathcal{V}\backslash\mathcal{A}$ and $\mathcal{V}\backslash\mathcal{B}$, respectively;
    \item The node set $\mathcal{C}$ corresponds to $\mathcal{V}\backslash(\mathcal{A} \cup \mathcal{B})$.
\end{itemize}
\end{lemma}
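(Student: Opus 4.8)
The plan is to reduce the claim to a statement about the condensation of $\mathcal{G}$ into its strongly connected components (SCCs), from which the desired sets $\mathcal{A}$ and $\mathcal{B}$ will be read off as \emph{source} components. First I would partition $\mathcal{V}$ by the mutual-reachability relation (nodes $i$ and $j$ are equivalent when each can be reached from the other), obtaining the SCCs of $\mathcal{G}$, and form the condensation $\mathcal{G}^{*}$, the directed graph whose nodes are the SCCs and which has an edge from one SCC to another whenever $\mathcal{G}$ has an edge between them. By construction $\mathcal{G}^{*}$ is acyclic, so it possesses at least one source, i.e., an SCC receiving no incoming edge from any other SCC.

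The central step is the equivalence: $\mathcal{G}$ has a directed spanning tree if and only if $\mathcal{G}^{*}$ has exactly one source. For the direction I need, I would argue the contrapositive. Suppose $\mathcal{G}^{*}$ has a unique source $\mathcal{S}$. Walking backward along incoming edges from any node $v$ in the finite acyclic $\mathcal{G}^{*}$ must terminate at a source, which can only be $\mathcal{S}$; hence every SCC, and therefore every node, is reachable from $\mathcal{S}$. Since $\mathcal{S}$ is strongly connected, any fixed $r \in \mathcal{S}$ reaches all of $\mathcal{V}$, giving a spanning tree rooted at $r$. Contrapositively, if $\mathcal{G}$ has no directed spanning tree, then $\mathcal{G}^{*}$ has at least two distinct sources, which I label $\mathcal{A}$ and $\mathcal{B}$.

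Having secured two distinct source SCCs $\mathcal{A}$ and $\mathcal{B}$, I would verify the three bullet points directly. The sets $\mathcal{A}$ and $\mathcal{B}$ are nonempty (every SCC is) and disjoint (distinct equivalence classes), and $\mathcal{G}[\mathcal{A}]$ is strongly connected by the definition of an SCC. Being sources of $\mathcal{G}^{*}$, neither $\mathcal{A}$ nor $\mathcal{B}$ receives an edge from any node outside itself, which is exactly the required incoming-edge condition for each. Setting $\mathcal{C} = \mathcal{V}\setminus(\mathcal{A}\cup\mathcal{B})$ then completes the partition.

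I expect the only genuine obstacle to be the spanning-tree/source equivalence, specifically the reachability bookkeeping: carefully showing that a unique source reaches every node (the backward-walk argument, which relies on the finiteness and acyclicity of $\mathcal{G}^{*}$) and that strong connectivity of that source lets a single root inherit this reachability. Once this is established, the remaining verifications follow immediately from the definitions of SCC and condensation.
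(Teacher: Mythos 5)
Your proof is correct, and it rests on the same foundation as the paper's argument — forming the condensation, using its acyclicity, and extracting source components — but the two constructions allocate the sets differently. The paper picks a \emph{single} source SCC as $\mathcal{A}$, lets $\mathcal{C}$ be the nodes reachable from it, and defines $\mathcal{B} = \mathcal{V}\backslash(\mathcal{A}\cup\mathcal{C})$, arguing that $\mathcal{B}$ is nonempty precisely because otherwise a spanning tree would exist; you instead prove the explicit equivalence ``$\mathcal{G}$ has a directed spanning tree iff the condensation has exactly one source'' (the same logical content as the paper's nonemptiness claim) and then take $\mathcal{A}$ and $\mathcal{B}$ to be \emph{two distinct} source SCCs. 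Your version yields a symmetric and strictly stronger conclusion, since both $\mathcal{G}[\mathcal{A}]$ and $\mathcal{G}[\mathcal{B}]$ come out strongly connected, whereas the lemma only demands this of $\mathcal{A}$. The paper's version buys something else: because its $\mathcal{C}$ is exactly the downstream set of $\mathcal{A}$, every nonempty $\mathcal{C}$ receives an edge from $\mathcal{A}$, i.e. $L_{\mathcal{CA}} \neq \mathbf{0}$, which is the property invoked immediately after the lemma to conclude that the block $L_{\mathcal{C}}$ is not itself a Laplacian. Under your construction that follow-up claim can fail — $\mathcal{C}$ may contain a third source component with no incoming edges from $\mathcal{A}$ or $\mathcal{B}$ at all — so while your proof establishes the lemma as stated, it does not support the paper's subsequent block-structure remark without modification.
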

The key idea of \textbf{Lemma \ref{lemma_partitioning_G}} can be summarized as follows: Let $\mathcal{G}_{c} = (\mathcal{V}_{c},\mathcal{E}_{c})$ be the condensation of $\mathcal{G} = (\mathcal{V},\mathcal{E})$, which is a acyclic graph formed by contracting strongly connected components of $\mathcal{G}$ \cite{bullo2020lectures}. Since $\mathcal{G}$ has no directed spanning tree, $\mathcal{G}_{c}$ is acyclic; thus there is $v_{\mathcal{A}} \in \mathcal{V}_{c}$ having no parent nodes. The set $\mathcal{A} \subset \mathcal{V}$ corresponds to $v_{\mathcal{A}}$. Further $\mathcal{C} \subset \mathcal{V}$ corresponds to the nodes that are reachable from $v_{\mathcal{A}}$. Finally, the set $\mathcal{B}=\mathcal{V}\backslash(\mathcal{A} \cup \mathcal{C})$ is nonempty because $\mathcal{G}_{c}$ has no directed spanning tree. Then, no nodes in $\mathcal{A}$ and $\mathcal{B}$ have incoming edges from $\mathcal{V}\backslash\mathcal{A}$ and $\mathcal{V}\backslash\mathcal{B}$, respectively. It follows from \textbf{Lemma \ref{lemma_partitioning_G}} that if $\mathcal{G}$ has no directed spanning tree, its Laplacian $L$ can be partitioned as follows:
\begin{align*}
    L = \left[ \begin{matrix}
	L_{\mathcal{A}}&		\mathbf{0}&		\mathbf{0}\\
	\mathbf{0}&		L_{\mathcal{B}}&		\mathbf{0}\\
L_{\mathcal{CA}}&		L_{\mathcal{CB}}&	L_{\mathcal{C}}\\
\end{matrix} \right].
\end{align*}
This shows that $L_{\mathcal{A}}$ and $L_{\mathcal{B}}$ correspond to the Laplacian matrix of induced subgraph $\mathcal{G}[\mathcal{A}]$ and $\mathcal{G}[\mathcal{B}]$, respectively. On the other hand, $L_{\mathcal{C}}$ is not the Laplacian matrix of $\mathcal{G}[\mathcal{C}]$ because either $L_{\mathcal{CA}}$ or $L_{\mathcal{CB}}$ is nonzero.

In this paper, we study the following diffusively coupled dynamics with positive edge weights:
%Consider the following single integrator modeled nodes:
%\begin{align*}
%\dot{x}_{i} = u_{i}, \; i = 1,\ldots,N,
%\end{align*}
%where $x_{i} \in \mathbb{R}^{n}$ and $u_{i} \in \mathbb{R}^{n}$ denote the state and control input of node $i$, respectively. Assume that the interaction topology among the nodes is modeled by a graph $\mathcal{G}$. The following consensus protocol has been widely adopted in the literature \cite{olfati-saber2004consensus,ren2005consensus}:
\begin{align}	\label{consensus_protocol_1}
% {u}_{i} = \sum_{j \in \mathcal{N}_{i}} a_{ij} (x_{j} - x_{i}), \nonumber \\
\dot{x}_{i} =\sum_{j \in \mathcal{N}_{i}} a_{ij} (x_{j} - x_{i}),
\end{align}
%which can be considered the dynamics of each node of $\mathcal{G}$.
where ${x}_{i}$ is a state of node $i$ and ${a}_{ij}$ is a weight of $(j,i)$. Without losing generality, assume that $n = 1$. The node dynamics in \eqref{consensus_protocol_1} can be written as
\begin{align}	\label{consensus_protocol_2}
\dot{x} = -Lx,
\end{align}
where $x = [x_{1} \cdots x_{N}]^{T}$ and $L$ is the Laplacian matrix of $\mathcal{G}$. % We call the node dynamics \eqref{consensus_protocol_2} with the graph $\mathcal{G}$ a  network \eqref{consensus_protocol_2} over $\mathcal{G}$.
It is said that the network \eqref{consensus_protocol_2} over $\mathcal{G}$ reaches consensus if $x_{i} - x_{j} \to 0$ as $t \to \infty$ for any nodes $i,j$ of $\mathcal{G}$. It is well-known that the coupled network \eqref{consensus_protocol_2} over a graph $\mathcal{G}$ reaches consensus if and only if $\mathcal{G}$ has a directed spanning tree.

If $\mathcal{G}$ has no directed spanning tree, the network \eqref{consensus_protocol_2} cannot reach consensus. To discuss this, we need the concept of leader and followers.
%Though consensus cannot be reached in the absence of a directed spanning tree, the existence of a directed spanning forest ensures convergence of  network \eqref{consensus_protocol_2} over $\mathcal{G}$. To discuss this, we need to define leader and follower nodes as follows:
%
%\begin{definition} \label{def_leader_follower} \cite{cao2009containment} % [Leader and follower nodes]
For a graph, a node is a leader node if it has no neighbor, and a follower node if it has at least one neighbor.
Denote the set of leader nodes by $\mathcal{L}$ and the set of follower nodes by $\mathcal{F}$. The consensus protocol in \eqref{consensus_protocol_1} can be written as follows \cite{cao2009containment}:
\begin{subequations} \label{consensus_protocol_3}	
\begin{align} 
\dot{x_{i}}  &= \sum_{j \in \mathcal{N}_{i}} a_{ij} (x_{j} - x_{i}),	\; i \in \mathcal{F}, \\
\dot{x_{i}}  &= 0, \; i \in \mathcal{L}.	
\end{align}
\end{subequations}
All the follower nodes of \eqref{consensus_protocol_3} over a graph $\mathcal{G}$ converge to the stationary convex hull spanned by the stationary leader nodes if and only if $\mathcal{G}$ has a directed spanning forest \cite{cao2009containment,liu2012necessary}. 
\section{Clusters in Consensus Networks}    \label{sec_clusters}

% Since the multi-leader graph has no single rooted directed spanning tree, it cannot reach a consensus. However, the convergence of this network system is guaranteed by [6] and [7]. Therefore, for a given graph topology and initial conditions, there may be sets of nodes converging to the common value, and they are defined as follows.

% In this section, we first review some existing results about multi-consensus and clusters that are dependent on the edge weights. Then, we develop the notion of topological clusters, which are independent of edge weights. Some properties of topological clusters are also investigated.

In this section, we first review clustering behavior in single-integrator node networks, which is dependent on edge weights. Then, we develop the notion of topological clusters, which is independent of edge weights. Some properties of topological clusters are also investigated.

\subsection{Clusters of networks over weighted graphs}
%As discussed in the previous section, in multi-consensus and cluster frameworks, all nodes of the network converge to their steady-state values.  

%In a consensus network, steady-state values of nodes are dependent on node dynamics, interaction topology, edge weights, and initial values. In particular, the directed network with single-integrator node dynamics cannot reach a consensus in the absence of a directed spanning tree in the graph $\mathcal{G}$. For example, in the graphs shown in Fig. \ref{fig_cluster}, there is no directed spanning tree, so the network cannot reach a consensus. However, some nodes can converge to the identical value as indicated by the dotted lines, and this is heavily influenced by the graph topology and edge weights. Such phenomena have been extensively studied in an algebraic manner, and one of the representative concepts is multi-consensus (and cluster), which is defined as follows:

\begin{figure}% [h]
\centering
\begin{subfigure}[b]{0.45\linewidth}
\centering
\begin{tikzpicture}
\draw[thick] (-1.2,1) circle (0.21cm);          %1
\node at (-1.2,1) {1};
\draw[thick] (0,1) circle (0.21cm);             %2
\node at (0,1) {2};
\draw[thick] (-0.6,0) circle (0.21cm);          %3
\node at (-0.6,0) {3};
\draw[thick] (1.2,1) circle (0.21cm);           %4
\node at (1.2,1) {4};
\draw[thick] (2,0.3) circle (0.21cm);           %5
\node at (2,0.3) {5};
\draw[thick] (1.2,-0.8) circle (0.21cm);        %6
\node at (1.2,-0.8) {6};
\draw[thick] (-0.1,-1.3) circle (0.21cm);       %7
\node at (-0.1,-1.3) {7};
\draw[thick,<->] (-0.99,1) -- (-0.21,1);        %1<->2
\draw[thick,->] (-1.1,0.83) -- (-0.7,0.17);     %1->3
\draw[thick,<->] (1.36,0.87) -- (1.84,0.43);    %4<->5
\draw[thick,->] (-0.42,-0.1) -- (1.02,-0.7);    %3->6
\draw[thick,->] (1.2,0.79) -- (1.2,-0.59);      %4->6
\draw[thick,->] (-0.56,-0.2) -- (-0.14,-1.1);   %3->7
\draw[thick,->] (1.1,0.82) -- (0,-1.12);        %4->7
\draw[gray,thick,dashed] (-0.6,0.66) circle (1cm);
\node at (0,0.33) {$\mathcal{H}_1$};
\draw[rotate=-45,gray,thick,dashed] (0.63,1.6) ellipse (0.9cm and 0.5cm);
\node at (1.8,1.5) {$\mathcal{H}_2$};
\draw[rotate=22,gray,thick,dashed] (0.12,-1.15) ellipse (1.1cm and 0.5cm);
\node at (0.55,-1.3) {$\mathcal{C}$};
\node at (-0.6,1.2) {1};
\node at (-1.05,0.4) {1};
\node at (1.7,0.8) {1};
\node at (-0.55,-0.75) {1};
\node at (1.35,-0.1) {1};
\node at (0.8,-0.1) {1};
\node at (0.7,-0.8) {1};
\end{tikzpicture}
\caption{3 clusters}
\label{fig_cluster_ex_1}
\end{subfigure}
\hspace{3mm}
\begin{subfigure}[b]{0.45\linewidth}
\centering
\begin{tikzpicture}
\draw[thick] (-1.2,1) circle (0.21cm);          %1
\node at (-1.2,1) {1};
\draw[thick] (0,1) circle (0.21cm);             %2
\node at (0,1) {2};
\draw[thick] (-0.6,0) circle (0.21cm);          %3
\node at (-0.6,0) {3};
\draw[thick] (1.2,1) circle (0.21cm);           %4
\node at (1.2,1) {4};
\draw[thick] (2,0.3) circle (0.21cm);           %5
\node at (2,0.3) {5};
\draw[thick] (1.2,-0.8) circle (0.21cm);        %6
\node at (1.2,-0.8) {6};
\draw[thick] (-0.1,-1.3) circle (0.21cm);       %7
\node at (-0.1,-1.3) {7};
\draw[thick,<->] (-0.99,1) -- (-0.21,1);        %1<->2
\draw[thick,->] (-1.1,0.83) -- (-0.7,0.17);     %1->3
\draw[thick,<->] (1.36,0.87) -- (1.84,0.43);    %4<->5
\draw[thick,->] (-0.42,-0.1) -- (1.02,-0.7);    %3->6
\draw[thick,->] (1.2,0.79) -- (1.2,-0.59);      %4->6
\draw[thick,->] (-0.56,-0.2) -- (-0.14,-1.1);   %3->7
\draw[thick,->] (1.1,0.82) -- (0,-1.12);        %4->7
\draw[gray,thick,dashed] (-0.6,0.66) circle (1cm);
\node at (0,0.33) {$\mathcal{H}_1$};
\draw[rotate=-45,gray,thick,dashed] (0.63,1.6) ellipse (0.9cm and 0.5cm);
\node at (1.8,1.5) {$\mathcal{H}_2$};
\draw[gray,thick,dashed] (1.2,-0.8) circle (0.35cm);
\node at (0.5,-1.3) {$\mathcal{C}_1$};
\draw[gray,thick,dashed] (-0.1,-1.3) circle (0.35cm);
\node at (1.75,-1.1) {$\mathcal{C}_2$};
\node at (-0.6,1.2) {1};
\node at (-1.05,0.4) {2};
\node at (1.7,0.8) {1};
\node at (-0.55,-0.75) {3};
\node at (1.35,-0.1) {4};
\node at (0.8,-0.1) {1};
\node at (0.65,-0.8) {2};
%\draw[thick] (0,-1.8) circle (0.00001cm);
\end{tikzpicture}
\caption{4 clusters}
\label{fig_cluster_ex_2}
\end{subfigure}
\caption{Clusters of networks with identical graph topology but different edge weights}
\label{fig_cluster}
\end{figure}
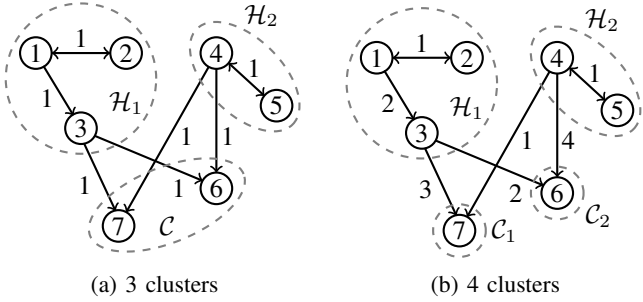

Consider a network of single-integrator nodes over a graph. Though the network cannot reach a consensus in the absence of a spanning tree, it can be shown that the state of each node converges to a steady-state value. Further, some nodes can converge to the same value to form a node group, which we call a cluster. 

For a network over a weighted graph, the authors of \cite{MONACO201953} have proposed a method to identify clusters based on algebraic properties of the Laplacian matrix. According to \cite{MONACO201953}, the Laplacian matrix of a weighted graph can be written as
\begin{equation} \label{transformed laplacian}
\bar{L}=\left[ \begin{matrix}
	L_1&		0&		\cdots&		0&		0\\
	0&		L_2&		\cdots&		0&		0\\
	\vdots&		\vdots&		\ddots&		\vdots&		\vdots\\
	0&		0&		\cdots&		L_{\mu}&		0\\
	M_1&		M_2&		\cdots&		M_{\mu}&		M\\
\end{matrix} \right] 
\end{equation}
based on appropriate ordering of the nodes. In \eqref{transformed laplacian}, the set of nodes related to $L_i$  corresponds to a cluster for $i=1,\ldots,\mu$. The remaining clusters formed by the nodes related to $M$ can be identified by investigating algebraic condition of the kernel base of $\bar{L}$ as discussed in \cite{MONACO201953}.

Fig. \ref{fig_cluster} shows that clusters of two networks with the same graph topology but different edge weights, which can be obtained by using the method in \cite{MONACO201953}. In both networks, nodes 6 and 7 are related to $M$ in \eqref{transformed laplacian}. Depending on edge weights, one cluster is formed by nodes 6 and 7 in Fig. \ref{fig_cluster_ex_1} while two clusters are formed in Fig. \ref{fig_cluster_ex_2}.

The method proposed in \cite{MONACO201953} is useful to analyze clustering behavior in networks. However, the method requires exact information about edge weights, which is not realistic in some cases. 
For instance, it is challenging to identify the exact values of edge weights in a social network. Further, edge weights of a network can be time-varying. This observation suggests a necessity for an weight-independent notion of clusters, which is introduced in the following subsection.

%-----------------------------------------------------------------------------------------------

\subsection{Topological clusters}  \label{sec_topological_clusters}

In this section, we introduce a new concept called topological clusters, which are determined by purely topological connectivity, independent of the edge weight values. These clusters only depend on topological characteristics of networks.
Fig. \ref{topological_cluster_examples} shows examples of topological clusters. In these networks the nodes in each cluster converge to an identical value independent of edge weights. 

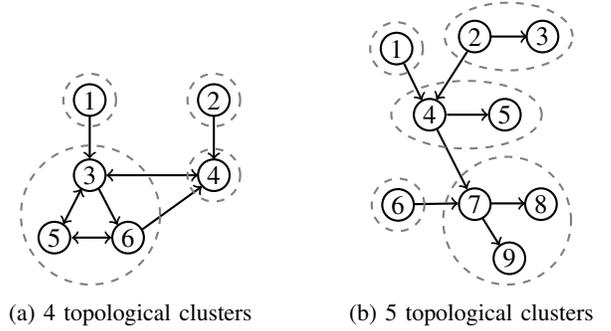
\begin{figure} % [h]
\centering
\begin{subfigure}[b]{0.45\linewidth}
\centering
\begin{tikzpicture}
\draw[thick] (-0.45,2) circle (0.21cm);
\node at (-0.45,2) {1};
\draw[thick] (-0.45,1) circle (0.21cm);
\node at (-0.45,1) {3};
\draw[thick] (-0.92,0.17) circle (0.21cm);
\node at (-0.92,0.17) {5};
\draw[thick] (1.2,1) circle (0.21cm);
\node at (1.2,1) {4};
\draw[thick] (1.2,2) circle (0.21cm);
\node at (1.2,2) {2};
\draw[thick] (0.06,0.17) circle (0.21cm);
\node at (0.06,0.17) {6};
\draw[thick,->] (-0.45,1.79) -- (-0.45,1.22);
\draw[thick,->] (1.2,1.79) -- (1.2,1.21);
\draw[thick,->] (0.25,0.28) -- (1.04,0.85);
\draw[thick,<->] (-0.16,0.17) -- (-0.69,0.17);
\draw[thick,<->] (-0.56,0.82) -- (-0.8,0.35);
\draw[thick,->] (-0.33,0.82) -- (-0.08,0.34);
\draw[thick,<->] (-0.23,1) -- (0.99,1);
\draw[gray,thick,dashed] (1.2,2) circle (0.35cm);
\draw[gray,thick,dashed] (-0.45,2) circle (0.35cm);
\draw[gray,thick,dashed] (1.2,1) circle (0.35cm);
\draw[rotate=0,gray,thick,dashed] (-0.44,0.47) ellipse (0.925cm and 0.925cm);
\end{tikzpicture}
\caption{4 topological clusters}
% \label{fig_cluster_ex_5}
\end{subfigure}
\hspace{3mm}
\begin{subfigure}[b]{0.45\linewidth}
\centering
\begin{tikzpicture}
\draw[thick] (-0.2,0.04) circle (0.21cm);
\node at (-0.2,0.04) {5};
\draw[thick] (-1.2,0.04) circle (0.21cm);
\node at (-1.2,0.04) {4};
\draw[thick] (-0.6,-1.15) circle (0.21cm);
\node at (-0.6,-1.15) {7};
\draw[thick] (-1.64,0.92) circle (0.21cm);
\node at (-1.64,0.92) {1};
\draw[thick] (0.3,1.08) circle (0.21cm);
\node at (0.3,1.08) {3};
\draw[thick] (0.28,-1.15) circle (0.21cm);
\node at (0.28,-1.15) {8};
\draw[thick] (-0.14,-1.87) circle (0.21cm);
\node at (-0.14,-1.87) {9};
\draw[thick] (-0.6,1.08) circle (0.21cm);
\node at (-0.6,1.08) {2};
\draw[thick] (-1.62,-1.16) circle (0.21cm);
\node at (-1.62,-1.16) {6};

\draw[thick,->] (-1.54,0.72) -- (-1.3,0.23);
\draw[thick,->] (-0.7,0.88) -- (-1.11,0.24);
\draw[thick,<-] (-0.42,0.04) -- (-0.97,0.04);
\draw[thick,<-] (-0.7,-0.95) -- (-1.1,-0.16);
\draw[thick,->] (-0.5,-1.33) -- (-0.27,-1.7);
\draw[thick,->] (-0.4,1.08) -- (0.1,1.08);
\draw[thick,<-] (-0.82,-1.14) -- (-1.4,-1.15);
\draw[thick,->] (-0.4,-1.14) -- (0.07,-1.14);
\draw[gray,thick,dashed] (-1.64,0.92) circle (0.35cm);
\draw[gray,thick,dashed] (-1.62,-1.16) circle (0.35cm);
\draw[rotate=0,gray,thick,dashed] (-0.15,1.08) ellipse (0.85cm and 0.45cm);
\draw[rotate=0,gray,thick,dashed] (-0.71,0.04) ellipse (1.0cm and 0.45cm);
\draw[rotate=0,gray,thick,dashed] (-0.17,-1.37) ellipse (0.85cm and 0.85cm);
\end{tikzpicture}
\caption{5 topological clusters}
% \label{fig_cluster_ex_6}
\end{subfigure}
\caption{Examples of topological clusters.}
\label{topological_cluster_examples}
\end{figure}

To clearly characterize clustering properties due to topological characteristics, we define topological clusters as follows:
\begin{definition} [Topological cluster]  \label{def_topological_cluster} 
For a network, a topological cluster is defined as a maximal set of nodes that converge to an identical value independent of edge weights.  
\end{definition}

% \begin{definition} [Topological cluster]  \label{def_topological_cluster} 
% For a consensus network $\dot{x} = -Lx$ over a multi-leader graph $\mathcal{G} = (\mathcal{V}, \mathcal{E}, A)$ with initial condition $x(0) = x_{0}$, a topological cluster is defined as a maximal set of nodes that converges to an identical value independently of the weight matrix $A$ and the initial value $x_{0}$.  
% \end{definition}

In the above definition, a topological cluster is a maximal set because it cannot be expanded by addition of any node. That is, if a node is added to a topological cluster, the nodes of the expanded set may not converge to an identical value. In contrast, a topological cluster can be a part of a cluster. %For instance, if all nodes in a network have same initial value, they remain in consensus independent of the interaction topology.

Since topological clusters are dependent only on node dynamics and interaction topology, they can be found by analyzing the interaction topology once node dynamics are given. The following theorem provides a necessary and sufficient condition for topological clusters of \eqref{consensus_protocol_2}:
\begin{theorem} \label{thm_topological_clusters}
For a  network \eqref{consensus_protocol_2} over a digraph $\mathcal{G} = (\mathcal{V}, \mathcal{E}, A)$, a subset $\mathcal{S} \subset \mathcal{V}$ is a topological cluster if and only if $\mathcal{S}$ is a maximal set subject to the following conditions:
\begin{enumerate}
    \item[(C1)] The induced subgraph $\mathcal{G}[\mathcal{S}]$ has at least one  spanning tree;
    \item[(C2)] In $\mathcal{S}$, only one root node of $\mathcal{G}[\mathcal{S}]$ or no node has incoming edges from $\mathcal{V}\backslash\mathcal{S}$.
\end{enumerate}
\end{theorem}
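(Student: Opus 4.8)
The plan is to prove the theorem through the agreement value of the steady state. First I would record that, by Lemma~\ref{lemma_Laplacian_1} together with the fact that the zero eigenvalue of a graph Laplacian is semisimple, every solution of \eqref{consensus_protocol_2} converges to a steady state $x^{*}$ with $Lx^{*}=0$; hence membership in a cluster is exactly the equality of entries of $x^{*}$, and for any node $i$ with neighbors the identity $l_{ii}x_i^{*}=\sum_{j\in\mathcal N_i}a_{ij}x_j^{*}$ (with $l_{ii}=\sum_{j\in\mathcal N_i}a_{ij}$) exhibits $x_i^{*}$ as a convex combination of the steady values of its in-neighbors. I would then reduce the whole statement to two facts: (F1) every set satisfying (C1) and (C2) lies in a single topological cluster, and (F2) every topological cluster satisfies (C1) and (C2). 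Granting these, the maximal sets of the two kinds coincide: a cluster obeys (C1)--(C2) by (F2) and cannot be enlarged within the class of (C1)--(C2) sets, since by (F1) any such enlargement would still lie in one cluster, impossible for a maximal-agreement set; conversely a maximal (C1)--(C2) set lies in a cluster by (F1), and that cluster again satisfies (C1)--(C2), forcing equality.

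The core of (F1) is a discrete maximum principle. Let $\mathcal S$ satisfy (C1) with root $r$ of $\mathcal G[\mathcal S]$ and satisfy (C2). By (C2) every node $i\in\mathcal S\setminus\{r\}$ has all its in-neighbors inside $\mathcal S$, so the steady-state identity writes $x_i^{*}$ as a convex combination of steady values in $\mathcal S$. Setting $y_i=x_i^{*}-x_r^{*}$, I get $y_r=0$ and, for $i\neq r$, $y_i=\sum_{j\in\mathcal N_i}w_{ij}y_j$ with positive weights summing to one over in-neighbors in $\mathcal S$. If $M=\max_{i\in\mathcal S}y_i$ were attained at a node other than $r$, then equality in the convex combination forces every in-neighbor of a maximizer to be a maximizer too; tracing in-neighbors backward along a path from $r$ to the maximizer (which exists because $r$ reaches every node of $\mathcal S$) drives the maximum back to $r$, whence $M=y_r=0$. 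The same argument for the minimum yields $y\equiv 0$, so all of $\mathcal S$ shares $x_r^{*}$ for every positive weighting, and $\mathcal S$ lies in one cluster. When no node of $\mathcal S$ has an external in-edge, the same conclusion is the autonomous consensus of $\mathcal G[\mathcal S]$, which agrees by Lemmas~\ref{lemma_Laplacian_1}--\ref{lemma_Laplacian_2}.

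For (F2) I would argue by contradiction, constructing edge weights that destroy agreement whenever (C1) or (C2) fails. If (C1) fails, Lemma~\ref{lemma_partitioning_G} applied to $\mathcal G[\mathcal S]$ splits $\mathcal S$ into nonempty $\mathcal A,\mathcal B$ (with $\mathcal G[\mathcal A]$ strongly connected), neither of which receives edges from the remaining nodes of $\mathcal S$; hence the steady values on $\mathcal A$ and on $\mathcal B$ are decoupled through $\mathcal S$, each depending only on its own part together with $\mathcal V\setminus\mathcal S$, and I would choose weights (and, if needed, initial data driving the relevant sources) that route independent influences into $\mathcal A$ and $\mathcal B$ so they disagree. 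If (C2) fails, either two distinct nodes of $\mathcal S$ receive external edges, whereupon I would feed them from upstream values that can be made unequal so internal averaging cannot reconcile them, or a single external-entry node is not a root, whereupon it fails to reach some $v\in\mathcal S$ and I would pull the entry node and its downstream toward a value different from $v$.

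The hard part will be (F2), specifically excluding the degenerate situation in which common external inputs happen to force $\mathcal A$ and $\mathcal B$ (or the two entry nodes) to agree for every weighting. This is exactly where maximality of a topological cluster is essential: if the external drive were forced to equal the cluster value for all weights, the driving upstream nodes would themselves agree with $\mathcal S$ and hence belong to the cluster, contradicting maximality. Making this precise---showing that the boundary-to-interior steady-state map is non-constant across $\mathcal A$ and $\mathcal B$, or across two entry points, unless such forcing occurs---is the crux, and I expect to obtain it from a convex-combination representation $x_v^{*}=\sum_k\gamma_{vk}\alpha_k$ over the source strongly connected components, in which the source consensus values $\alpha_k$ vary independently and the coefficients $\gamma_{vk}$ can be perturbed through the edge weights, so that equality across decoupled parts for all weights must pull a common upstream node into the cluster.
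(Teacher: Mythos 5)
Your skeleton (reduce the theorem to (F1): every (C1)--(C2) set lies in one topological cluster, and (F2): every topological cluster satisfies (C1)--(C2), then let maximality match the two families) is sound, and your proof of (F1) is complete and genuinely different from the paper's sufficiency argument. Where you run a discrete maximum principle on $y_i=x_i^{*}-x_r^{*}$ --- using (C2) to see that every non-root node of $\mathcal{S}$ averages only over in-neighbors inside $\mathcal{S}$, so an extremal value propagates backward along a directed path from the root and forces $y\equiv 0$ --- the paper instead block-partitions $\dot{x}=-Lx$, invokes invertibility of $L_{\mathcal{S}}$ (citing \cite{liu2012necessary}) for uniqueness of the steady state, and verifies that the constant vector with value $\left(\sum_{j\in\mathcal{V}\backslash\mathcal{S}}a_{rj}x_{j}^{ss}\right)/\left(\sum_{j\in\mathcal{V}\backslash\mathcal{S}}a_{rj}\right)$ solves the steady-state equations. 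Your route is more elementary and self-contained; the paper's buys the explicit consensus value of the cluster, which it reuses when discussing follower topological clusters. One housekeeping item on your side: the convergence of $e^{-Lt}$ (semisimplicity of the zero eigenvalue of a directed Laplacian) should be stated with a citation, playing the role that \cite{cao2009containment,liu2012necessary} play in the paper.

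For (F2), however, there is a genuine gap as written: your case analysis is the same contraposition the paper performs (failure of (C1) via \textbf{Lemma \ref{lemma_partitioning_G}}, failure of (C2) via two entry nodes or a non-root entry node, plus non-maximality), but you explicitly leave the crux --- excluding the degenerate situation where external inputs force agreement for every weighting --- as an expectation rather than an argument. The mechanism you anticipate is exactly how the paper closes this: (i) upstream steady-state values do not depend on the weights of the entry edges, so when the feeds into $\mathcal{S}_{1}$ and $\mathcal{S}_{2}$ (or into two entry nodes) originate from two or more distinct topological clusters, varying those weights changes the convex combination and destroys agreement; (ii) when all feeds originate from a single topological cluster, the fed portion of $\mathcal{S}$ is pinned to that cluster's value for all weights and is thereby absorbed into the upstream cluster, contradicting that $\mathcal{S}$ is a maximal agreement set; and (iii) in the non-root-entry case the paper compares the autonomous consensus of the root set $\mathcal{S}_{r}$ against the externally driven followers $\mathcal{S}_{f}$, which cannot coincide for arbitrary weights and initial values. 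So nothing in your plan would fail, but to count as a proof you must actually carry out (i)--(iii); in particular, step (i) needs the observation that the quotient $\left(\sum_{k}w_{k}\alpha_{k}\right)/\left(\sum_{k}w_{k}\right)$ can equal a fixed value for all positive $w_{k}$ only if all $\alpha_{k}$ coincide, which is what pulls a common upstream node into the cluster and triggers the maximality contradiction.
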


% \begin{theorem}
% Given multi-leader graph $G(V,E)$ and subset $S \subset V$ \textcolor{blue}{that is $ \vert S \vert \geq 2$}\footnote{2 이상이라는 조건이 꼭 필요한가요?}, the set $S$ is a topological cluster if and only if $S$ is \textcolor{red}{a maximal\footnote{maximal은 수학적으로 사용되는 용어인데, 여기에 사용하기 적당한 의미를 가지고 있습니다. 구글에서 maximal set을 검색해서 Wolfram 사이트에 나온 내용을 확인해보세요.} set satisfying} the following conditions.\\
% 1) The induced subgraph $G[S]$ has at least \textcolor{red}{one directed spanning tree}\footnote{rooted directed spanning tree보다는 directed spanning tree가 더 일반적으로 사용되는 용어인 것 같습니다.}.\\
% 2) All nodes except one root of the rooted directed spanning tree of $G[S]$ do not have incoming edges from outside of the set $S$.
% \end{theorem}
\begin{proof}
{\color{black}
(Sufficiency) To prove sufficiency, we consider the following two cases: the set $\mathcal{S}$ is a maximal set such that $\mathcal{G}[\mathcal{S}]$ has at least one spanning tree, and either %
\begin{itemize}
    \item no node in $\mathcal{S}$ has incoming edges from $\mathcal{V} \backslash \mathcal{S}$ or
    \item only one root node of $\mathcal{G}[\mathcal{S}]$ has incoming edges from $\mathcal{V} \backslash \mathcal{S}$.
\end{itemize}
In the following, we show that the nodes in $\mathcal{G[S]}$ reach consensus for each case.

We first assume that no node in $\mathcal{S}$ has incoming edges from $\mathcal{V}\backslash\mathcal{S}$. Denote by $x_{\mathcal{S}}$ the concatenated vector of the nodes in $\mathcal{S}$. The node dynamics can be written as follows:
\begin{align*}
    \dot{x}_{\mathcal{S}} = -L_{\mathcal{S}} x_{\mathcal{S}},
\end{align*}
where $L_{\mathcal{S}}$ is the Laplacian matrix of the induced subgraph $\mathcal{G}[\mathcal{S}]$. According to \textbf{Lemma \ref{lemma_Laplacian_1}} and \textbf{\ref{lemma_Laplacian_2}}, since $\mathcal{G}[\mathcal{S}]$ has at least one spanning tree, the nodes in $\mathcal{S}$ reach consensus regardless of the edge weights.

We then consider the case where only one root node of $\mathcal{G[S]}$ has incoming edges from $\mathcal{V} \backslash \mathcal{S}$. Let $x_{\mathcal{S}}$ and $x_{\mathcal{V}\backslash\mathcal{S}}$ denote the concatenated vector of the nodes in $\mathcal{S}$ and $\mathcal{V}\backslash\mathcal{S}$, respectively. Then $\dot{x} = -Lx$ can be partitioned into 
\begin{align} \label{eq_partitioned_dynamics}
    \left[ \begin{array}{l}
	\dot{x}_{\mathcal{S}}\\
	\dot{x}_{\mathcal{V}\backslash\mathcal{S}}\\
\end{array} \right] = - \left[ \begin{matrix}
	L_{\mathcal{S}}&		L_{\mathcal{S}-\mathcal{V}\backslash\mathcal{S}}\\
	L_{\mathcal{V}\backslash\mathcal{S}-\mathcal{S}} &		L_{\mathcal{V}\backslash\mathcal{S}}\\
\end{matrix} \right] \left[ \begin{array}{l}
	x_{\mathcal{S}} \\
	x_{\mathcal{V}\backslash\mathcal{S}}\\
\end{array} \right].
\end{align}

Since $\mathcal{G}[\mathcal{S}]$ has at least one  spanning tree, $L_{\mathcal{S}}$ is invertible \cite{liu2012necessary}. Further it follows that $x_{\mathcal{S}}$ and $x_{\mathcal{V}\backslash\mathcal{S}}$ converge to constant vectors as $t \to \infty$ \cite{cao2009containment}. Thus we obtain
\begin{align*}
   \mathbf{0} =  L_{\mathcal{S}} x^{ss}_{\mathcal{S}} + L_{\mathcal{S}-\mathcal{V}\backslash\mathcal{S}} x^{ss}_{\mathcal{V}\backslash\mathcal{S}},
\end{align*}
where $x^{ss}_{\mathcal{S}}$ and $x^{ss}_{\mathcal{V}\backslash\mathcal{S}}$ denote the steady-state constant vector of the nodes in $\mathcal{S}$ and $\mathcal{V}\backslash\mathcal{S}$, respectively. Due to the invertibility of $L_{\mathcal{S}}$, it is obvious that $x^{ss}_{\mathcal{S}}$ is uniquely determined for given $x^{ss}_{\mathcal{V}\backslash\mathcal{S}}$. Suppose that $r \in \mathcal{S}$ is the root node having incoming edges from $\mathcal{V} \backslash \mathcal{S}$. Then $\dot{x}_{\mathcal{S}} = -L_{\mathcal{S}} x_{\mathcal{S}} - L_{\mathcal{S}-\mathcal{V}\backslash\mathcal{S}} x_{\mathcal{V}\backslash\mathcal{S}}$ can be written as
\begin{align*}
    \dot{x}_{i} &= \sum_{j \in \mathcal{S}} a_{ij}(x_{j} - x_{i}), \; i \in \mathcal{S}, i \not = r, \\
    \dot{x}_{r} &= \sum_{j \in \mathcal{S}} a_{rj}(x_{j} - x_{r}) + \sum_{j \in \mathcal{V}\backslash\mathcal{S}} a_{rj}(x_{j} - x_{r}),
\end{align*}
which results in
\begin{align*}
    x_{\mathcal{S}}^{ss} = \mathbf{1}_{|\mathcal{S}|} \frac{\sum_{j \in \mathcal{V}\backslash\mathcal{S}} a_{rj}x^{ss}_{j}}{\sum_{j \in \mathcal{V}\backslash\mathcal{S}} a_{rj}}.
\end{align*}
}
Hence, nodes belonging to $\mathcal{S}$ converge to an identical steady-state value $x_{\mathcal{S}}^{ss}$ whatever edge weights are.

(Necessity) The necessity could be proved by a contraposition. Let us consider the following three cases:
\begin{itemize}
    \item $\mathcal{G}[\mathcal{S}]$ does not have a  spanning tree;
    \item Any nodes that are not root nodes, or more than or equal to two root nodes of $\mathcal{G}[\mathcal{S}]$ have incoming edges from $\mathcal{V}\backslash\mathcal{S}$;
    \item $\mathcal{S}$ is not maximal while conditions in (C1) and (C2) are satisfied.
\end{itemize}
In the following, we show that $\mathcal{S}$ is not a topological cluster for each of the above three cases. 

First, suppose that $\mathcal{G}[\mathcal{S}]$ does not have a  spanning tree. It follows from \textbf{Lemma \ref{lemma_partitioning_G}} that $\mathcal{S}$ can be partitioned into $\mathcal{S}_{1}$, $\mathcal{S}_{2}$, and $\mathcal{S}_{3}$ such that
\begin{itemize}
    \item $\mathcal{S}_{1}$ and $\mathcal{S}_{2}$ are nonempty and $\mathcal{G}[\mathcal{S}_{1}]$ is strongly connected;
    \item no nodes in $\mathcal{S}_{1}$ and $\mathcal{S}_{2}$ have incoming edges from $\mathcal{S}\backslash\mathcal{S}_{1}$ and $\mathcal{S}\backslash\mathcal{S}_{2}$, respectively;
    \item  $\mathcal{S}_{3}$ corresponds to $\mathcal{S}\backslash(\mathcal{S}_{1}\cup \mathcal{S}_{2})$.
\end{itemize}
% \begin{itemize}
%     \item $\mathcal{S}_{1}$ is the set of the nodes in $\mathcal{G}[S]$ that correspond to the node $v \in \mathcal{G}_c[S]$ that is a leader node in $\mathcal{G}_c[S]$ .
%     \item $\mathcal{S}_{3}$ is the set of the nodes in $\mathcal{G}[S]$ that correspond to the nodes that are reachable from $v$ in $\mathcal{S}_1$.
%     \item $\mathcal{S}_{2}$ corresponds to $\mathcal{S} \backslash (\mathcal{S}_{1} \cup \mathcal{S}_{3})$.
% \end{itemize}
% where $\mathcal{G}_c[S]$ is the condensation of $\mathcal{G}[S]$.
%
% Discussion on the partitioning can be found in \cite{niezink2011consensus}. 
Based on the partitioning, $L_{\mathcal{S}}$ in \eqref{eq_partitioned_dynamics} can be written as follows:
\begin{align*}
    L_{\mathcal{S}} = \left[ \begin{matrix}
	L_{\mathcal{S}_{1}}&		\mathbf{0}&		\mathbf{0}\\
	\mathbf{0}&		L_{\mathcal{S}_{2}}&		\mathbf{0}\\
L_{\mathcal{S}_{31}}&		L_{\mathcal{S}_{32}}&		L_{\mathcal{S}_{3}}\\
\end{matrix} \right].
\end{align*}
Then it can be shown that $\mathcal{S}_{1}$ and $\mathcal{S}_{2}$ converge to different values depending on edge weights of the network as follows:
\begin{itemize}
    \item If $\mathcal{S}_{1}$ and $\mathcal{S}_{2}$ do not have incoming edges from $\mathcal{V}\backslash\mathcal{S}$, the steady-state values of $\mathcal{S}_{1}$ and $\mathcal{S}_{2}$ are determined independently of each other. Thus $\mathcal{S}$ is not a topological cluster.
    \item If only one of $\mathcal{S}_{1}$ and $\mathcal{S}_{2}$ has incoming edges from $\mathcal{V}\backslash\mathcal{S}$, the steady-state value of the nodes in the set is determined independently of the nodes in the other set, which means that $\mathcal{S}$ is not a topological cluster.
    \item Suppose that both of $\mathcal{S}_{1}$ and $\mathcal{S}_{2}$ have incoming edges from $\mathcal{V}\backslash\mathcal{S}$. 
    \begin{itemize}
        \item Suppose that $\mathcal{S}_{1}$ has only a single incoming edge from a node in $\mathcal{V}\backslash\mathcal{S}$. Since $\mathcal{S}_{1}$ is strongly connected and has no incoming edges from $\mathcal{S}\backslash\mathcal{S}_{1}$, $\mathcal{S}_{1}$ belongs to another topological cluster that includes the starting node of the incoming edge. Thus $\mathcal{S}$ is not a topological cluster. 
        \item If $\mathcal{S}_{1}$ has more than one incoming edges only from the nodes of a topological cluster in $\mathcal{V}\backslash\mathcal{S}$, the steady-state value of $\mathcal{S}_{1}$ is equal to that of the topological cluster in $\mathcal{V}\backslash\mathcal{S}$. This means that $\mathcal{S}_{1}$ belongs to the topological cluster in $\mathcal{V}\backslash\mathcal{S}$. Thus $\mathcal{S}$ is not a topological cluster.
        \item If $\mathcal{S}_{1}$ has more than one incoming edges from the nodes of multiple topological clusters in $\mathcal{V}\backslash\mathcal{S}$ and $\mathcal{S}_{2}$ also has incoming edges, the steady-state values of $\mathcal{S}_{1}$ and $\mathcal{S}_{2}$ are dependent on the weights for the incoming edges, which means that $\mathcal{S}$ is not a topological cluster.
    \end{itemize}
\end{itemize}

Second, suppose that any nodes that are not root nodes of $\mathcal{G}[\mathcal{S}]$, or more than or equal to two root nodes of $\mathcal{G}[\mathcal{S}]$ have incoming edges from $\mathcal{V}\backslash\mathcal{S}$. We consider only the case where $\mathcal{G}[\mathcal{S}]$ has at least one  spanning tree because $\mathcal{S}$ cannot be a topological cluster in the absence of  spanning trees as shown above.
%
%
%참고사항
%\begin{itemize}
%    \item Follower node만 외부 에지를 갖는 경우
%    \item Root node와 follower와 노드가 서로 다른 하나씩의 topological cluster로부터의 외부 에지를 갖는 경우
%    \item Follower node가 두 개 이상의 topological cluster로부터 외부 에지를 갖는 경우(root node가 외부 에지를 가져도 되고 그렇지 않아도 무방)
%\end{itemize}
%

Denote the set of the root nodes of $\mathcal{G}[\mathcal{S}]$ by $\mathcal{S}_{r}$ and the set of remaining nodes by $\mathcal{S}_{f}$. Then $\dot{x} = -Lx$ can be written as
\begin{align*} % \label{eq_partitioned_dynamics_second}
    \left[ \begin{array}{l}
	\dot{x}_{\mathcal{S}_{f}}\\
	\dot{x}_{\mathcal{S}_{r}}\\
	\dot{x}_{\mathcal{V}\backslash\mathcal{S}}\\
\end{array} \right] = - \left[ \begin{matrix}
	L_{\mathcal{S}_f}&		L_{\mathcal{S}_{f}-\mathcal{S}_{r}}&		L_{\mathcal{S}_{f}-\mathcal{V}\backslash\mathcal{S}}\\
	\mathbf{0}&		L_{\mathcal{S}_r}&		L_{\mathcal{S}_{r}-\mathcal{V}\backslash\mathcal{S}}\\
	L_{\mathcal{V}\backslash\mathcal{S}-\mathcal{S}_{r}}&		L_{\mathcal{V}\backslash\mathcal{S}-\mathcal{S}_{f}}&		L_{\mathcal{V}\backslash\mathcal{S}}\\
\end{matrix} \right] \left[ \begin{array}{l}
	x_{\mathcal{S}_{f}}         \\
	x_{\mathcal{S}_{r}}         \\
	x_{\mathcal{V}\backslash\mathcal{S}} \\
\end{array} \right].
\end{align*}
Since $\mathcal{G}[\mathcal{S}]$ has at least one  spanning tree, $L_{\mathcal{S}}$ is invertible; thus $L_{\mathcal{S}_{f}}$ is also invertible \cite[Lemma 4]{liu2012necessary}. Then $x_{\mathcal{S}_{f}}^{ss}$ is uniquely determined as follows: 
\begin{align*}
   x_{\mathcal{S}_{f}}^{ss} =  - L_{\mathcal{S}_f}^{-1}L_{\mathcal{S}_{f}-\mathcal{S}_{r}}x_{\mathcal{S}_{r}}^{ss} -   L_{\mathcal{S}_f}^{-1}L_{\mathcal{S}_{f}-\mathcal{V}\backslash\mathcal{S}} x_{\mathcal{V}\backslash\mathcal{S}}^{ss}.
\end{align*}
Consider the case where only some nodes that are not root nodes of $\mathcal{G}[\mathcal{S}]$ have incoming edges from $\mathcal{V}\backslash\mathcal{S}$. The dynamics of the nodes in $\mathcal{S}_{f}$ and $\mathcal{S}_{r}$ can be written as
\begin{align*}
    \dot{x}_{i} &= \sum_{j \in \mathcal{S}_{f}} a_{ij}(x_{j} - x_{i}) + \sum_{j \in \mathcal{S}_{r}} a_{ij}(x_{j} - x_{i}) \\
    & \quad + \sum_{j \in \mathcal{V}\backslash\mathcal{S}} a_{ij}(x_{j} - x_{i}), \; i \in \mathcal{S}_{f}, \\
% \end{align*}
% \begin{align*}
    \dot{x}_{i} &= \sum_{j \in \mathcal{S}_{r}} a_{ij}(x_{j} - x_{i}), \; i \in \mathcal{S}_{r},
\end{align*}
Since the nodes in $\mathcal{S}_{r}$ do not have any incoming edges from $\mathcal{V}\backslash\mathcal{S}_{r}$, they converge to the steady-state value that is dependent on $\mathcal{G}[\mathcal{S}_{r}]$ and the initial values. However, the steady-state values of the nodes in $\mathcal{S}_{f}$ are dependent on $x_{\mathcal{V}\backslash\mathcal{S}}$. Therefore, the steady-state values of the nodes in $\mathcal{S}_{r}$ and $\mathcal{S}_{f}$ cannot be identical for arbitrary edge weights and initial values, which means that $\mathcal{S}$ is not a topological cluster.

We next consider the case where more than or equal to two nodes have incoming edges from $\mathcal{V}\backslash\mathcal{S}$ and at least one of them is a root node of $\mathcal{G}[\mathcal{S}]$. We exclude the case where all incoming edges into the nodes of $\mathcal{S}$ come out from the same topological cluster because it makes the nodes of $\mathcal{S}$ belonging to the other topological cluster. Let $x_{r}$ be the root node which has incoming edges from $\mathcal{V}\backslash\mathcal{S}$, and $x_k$ be another node which has incoming edges from $\mathcal{V}\backslash\mathcal{S}$. Then we have
\begin{align*}
    \dot{x}_{r} &= \sum_{j \in \mathcal{S}_{r}} a_{rj}(x_{j} - x_{r}) + \sum_{j \in \mathcal{V}\backslash\mathcal{S}} a_{rj}(x_{j} - x_{r}), \\
    \dot{x}_{k} &= \sum_{j \in \mathcal{S}} a_{kj}(x_{j} - x_{k}) + \sum_{j \in \mathcal{V}\backslash\mathcal{S}} a_{kj}(x_{j} - x_{k}).
\end{align*}
Suppose that $\mathcal{S}$ is a topological cluster. Then, we have
\begin{align*}
   x_{r}^{ss} &= \frac{\sum_{j \in \mathcal{V}\backslash\mathcal{S}} a_{rj}x_{j}^{ss}}{\sum_{j \in \mathcal{S}_{r}} a_{rj} + \sum_{j \in \mathcal{V}\backslash\mathcal{S}} a_{rj}} \\
    & =   \frac{\sum_{j \in \mathcal{V}\backslash\mathcal{S}} a_{kj}x_{j}^{ss}}{\sum_{j \in \mathcal{S}} a_{kj} + \sum_{j \in \mathcal{V}\backslash\mathcal{S}} a_{kj}} = x_{k}^{ss},
    % 0 &= \sum_{j \in \mathcal{V}\backslash\mathcal{S}} a_{r_{1}j}(x_{j}^{ss} - x_{r_{1}^{ss}}), \\
    % 0 &= \sum_{j \in \mathcal{V}\backslash\mathcal{S}} a_{r_{2}j}(x_{j}^{ss} - x_{r_{2}}^{ss}).
\end{align*}
which cannot be identically true for arbitrary edge weights. This contradiction shows that $\mathcal{S}$ is not a topological cluster.

Finally, suppose that $\mathcal{S}$ is not maximal while conditions in (C1) and (C2) are satisfied. It follows from the proof for sufficiency that all the nodes of $\mathcal{S}$ converge to an identical value if conditions in (C1) and (C2) are satisfied. Since $\mathcal{S}$ is not maximal, one can expand the set $\mathcal{S}$ by adding at least one node in $\mathcal{V}\backslash\mathcal{S}$ while satisfying conditions in (C1) and (C2), which shows that $\mathcal{S}$ is not a topological cluster.

\end{proof}

% \textcolor{blue}{At the \textbf{Theorem 4.1}, the necessary and sufficient conditions for the topological cluster is proved when $\vert S \vert \geq 2$. In this study, the case of $\vert S \vert = 1$ also will be treated as a topological cluster. In other words, in the case of a node that does not have any node which converges to the identical value independently of $W$ or $x_0$, it will be treated as belonging to a topological cluster of size 1.}

\begin{remark}
Several remarks are provided:
\begin{itemize}
    \item Though we focus on \eqref{consensus_protocol_2}, the result developed in what follows can be applied to a general network model studied in \cite{wieland2011internal}.
    \item If a topological cluster in the network \eqref{consensus_protocol_2} has incoming edges from the other topological clusters, the edges need to come out from two or more topological clusters. If not, the maximality condition in \textbf{Theorem \ref{thm_topological_clusters}} cannot be satisfied.
    \item There may exist topological clusters consisting of only a single node. For instance, a topological cluster can consist of only one leader node. Further there may exist topological clusters consisting of only single node that has incoming edges from two or more topological clusters. An existence of topological clusters of a single node implies that every node belongs to a topological cluster.
    % If $|\mathcal{S}| = 1$, we assume that $\mathcal{S}$ has a directed spanning tree. 
    \item The root node of a topological cluster $\mathcal{S}$ is defined as the root node of a  spanning tree in the induced subgraph $\mathcal{G}[\mathcal{S}]$. % It is not a root of directed spanning tree of $\mathcal{G}$.
    \item If a subset $\mathcal{S}$ satisfies (C1) and (C2) in \textbf{Theorem \ref{thm_topological_clusters}} but does not satisfy the maximality condition, all nodes in $\mathcal{S}$ belong to the same topological cluster, but $\mathcal{S}$ is not a topological cluster.
\end{itemize}
\label{remark_topological_cluster}
\end{remark}
%%%%%

Based on the proof of \textbf{Theorem \ref{thm_topological_clusters}}, topological clusters can be categorized into two classes as follows:
\begin{definition} [Leader and follower topological clusters]
A topological cluster is a leader topological cluster if it has no incoming edges coming out from the other topological clusters. On the other hand it is called a follower topological cluster if it is not a leader topological cluster.
\label{def_leader_follower_topological_cluster}
\end{definition}
%The following are straightforward:
%\begin{itemize}

The convergence value of a leader topological cluster is determined independently of the other topological clusters. In this sense, a leader topological cluster can be regarded as a leader in the network.
In contrast, a follower topological cluster has multiple incoming edges staring from two or more topological clusters. As shown in the proof of \textbf{Theorem \ref{thm_topological_clusters}}, the  convergence value of a follower topological cluster is determined by the convergence values of the topological clusters from which the incoming edges come out.
    % \item From the \textbf{Theorem 4.1}, there are two kinds of topological clusters. The first is a leader-topological cluster, where the root of $G[S]$ is a leader node, and this root node also does not have an incoming edge from the outside. According to \textbf{Assumption 3.1}, multiple leader nodes can not exist in a leader-topological cluster. The second is a follower-topological cluster, and it has only one root node which has incoming edges from the outside.
%\end{itemize}

\begin{figure} % [h]
\centering
\begin{tikzpicture}
\node at (-2.42,2.08) {1};
\draw[thick] (-2.42,2.08) circle (0.21cm);
\node at (-2.91,1.23) {2};
\draw[thick] (-2.91,1.23) circle (0.21cm);
\node at (-1.88,1.23) {3};
\draw[thick] (-1.88,1.23) circle (0.21cm);
\node at (-1.25,0.69) {4};
\draw[thick] (-1.25,0.69) circle (0.21cm);
\node at (-0.28,2.31) {5};
\draw[thick] (-0.28,2.31) circle (0.21cm);
\node at (0.62,2.31) {6};
\draw[thick] (0.62,2.31) circle (0.21cm);
\node at (-0.28,1.4) {7};
\draw[thick] (-0.28,1.4) circle (0.21cm);
\node at (0.62,1.4) {8};
\draw[thick] (0.62,1.4) circle (0.21cm);
\node at (-3.3,-0.01) {9};
\draw[thick] (-3.3,-0.01) circle (0.21cm);
\node at (-2.5,-0.69) {10};
\draw[thick] (-2.5,-0.69) circle (0.21cm);
\node at (-1.53,-0.76) {11};
\draw[thick] (-1.53,-0.76) circle (0.21cm);
\node at (-2.08,-1.59) {12};
\draw[thick] (-2.08,-1.59) circle (0.21cm);
\node at (-0.4,0.12) {13};
\draw[thick] (-0.4,0.12) circle (0.21cm);
\node at (-0.2,-0.86) {14};
\draw[thick] (-0.2,-0.86) circle (0.21cm);
\node at (0.54,-0.26) {15};
\draw[thick] (0.54,-0.26) circle (0.21cm);
\node at (-0.68,-1.89) {16};
\draw[thick] (-0.68,-1.89) circle (0.21cm);
\draw[thick,->] (-2.52,1.88) -- (-2.8,1.42);             %1->2
\draw[thick,->] (-2.32,1.88) -- (-2,1.41);             %1->3
\draw[thick,->] (-2.69,1.23) -- (-2.1,1.23);                 %2->3
\draw[thick,->] (-1.72,1.09) -- (-1.42,0.83);           %3->4
\draw[thick,->] (-1.95,1.02) -- (-2.41,-0.49);          %3->10
\draw[thick,->] (-1.07,0.57) -- (-0.58,0.24);           %4->13
\draw[thick,<->] (-0.06,2.31) -- (0.4,2.31);              %5->6
\draw[thick,->] (-0.28,2.09) -- (-0.28,1.61);               %5->7
\draw[thick,->] (-0.08,1.4) -- (0.4,1.4);               %7->8
\draw[thick,->] (-0.31,1.19) -- (-0.37,0.33);            %7->13
\draw[thick,->] (-3.14,-0.15) -- (-2.67,-0.56);          %9->10
\draw[thick,<->] (-2.28,-0.71) -- (-1.75,-0.75);          %10->11
\draw[thick,->] (-2.4,-0.89) -- (-2.17,-1.39);             %10->12
\draw[thick,->] (-1.65,-0.94) -- (-1.96,-1.41);             %11->12
\draw[thick,->] (-1.86,-1.64) -- (-0.9,-1.85);         %12->16
\draw[thick,->] (-0.35,-0.1) -- (-0.23,-0.65);          %13->14
\draw[thick,->] (-0.19,0.05) -- (0.33,-0.18);        %13->15
\draw[thick,->] (-0.03,-0.74) -- (0.37,-0.4);          %14->15
\draw[thick,->] (-0.29,-1.05) -- (-0.58,-1.7);             %14->16
\node at (-3.35,0.59) {C};
\draw[gray,thick,dashed] (-3.3,-0.01) circle (0.35cm);
\node at (-3.02,-1.7) {D};
\draw[gray,thick,dashed] (-0.68,-1.89) circle (0.35cm);
\node at (-0.09,-1.99) {F};
\draw[gray,thick,dashed] (-0.04,-0.27) circle (0.95cm);
\node at (-0.68,2.73) {B};
\draw[gray,thick,dashed] (0.17,1.87) circle (1.cm);
\node at (0.96,-0.86) {E};
\draw[gray,thick,dashed] (-2.07,-0.99) circle (0.975cm);
\node at (-1.66,2.44) {A};
\draw[rotate=-38,gray,thick,dashed] (-2.4557,-0.2957) ellipse (1.4cm and 1.cm);
\end{tikzpicture}
\caption{Illustration of topological clusters} 
\label{examples_various_topological_cluster}
\end{figure}

We provide an example to illustrate the notion of topological clusters. Fig. \ref{examples_various_topological_cluster} illustrates various types of topological clusters marked as $A,B,C,D,E$ and $F$ . First, $A$, $B$, and $C$ are leader topological clusters. A leader topological cluster may or may not have a leader node. For instance, $A$ has a leader node while $B$ has not. A leader topological cluster may consist of single node as $C$. Second, $D$, $E$, and $F$ are follower topological clusters. As discussed in \textbf{Theorem \ref{thm_topological_clusters}} and \textbf{Remark \ref{remark_topological_cluster}}, it is shown that the induced subgraph of each follower topological cluster has at least one root node and the corresponding  spanning tree. Accordingly, we can predict the consensus value of each topological cluster as follows:
\begin{itemize}
    \item Cluster $A$: the initial value of node 1.
    \item Cluster $B$: the weighted sum of the initial values of nodes 5 and 6.
    \item Cluster $C$: the initial value of node 9.
    \item Cluster $D$: the weighted sum of the initial values of nodes $A$ and $C$.
    \item Cluster $E$: the weighted sum of the initial values of nodes $A$ and $B$.
    \item Cluster $F$: the weighted sum of the initial values of nodes $D$ and $E$.
\end{itemize}

\section{Topological Clustering Algorithm}  \label{sec_clustering_algorithm}
%We present an algorithm to identify topological clusters in  network \eqref{consensus_protocol_2}. We provide several properties of topological clusters and then propose the algorithm with a complete pseudo-code.

% The cluster search algorithm is based on the existence of the only cluster root (CR) node in each topological cluster, which will be defined in \textbf{Definition \ref{def_CR_CF}}. For a leader topological cluster such as $A$ and $E$, the only leader node can be considered the CR node. For a follower cluster such as $C$, $D$, and $F$, the only root node that has incoming edges from the outside of the topological clusters can be considered the CR node. On the other hand, $B$ has multiple root nodes and thus its CR node is not identified. To define the cluster root node for leader topological clusters with multiple root nodes such as $B$, we introduce the notion of Leader-like strongly connected components as follows:

Prior to discussing the topological clustering algorithm, let us introduce leader-like strongly connected components (LSCCs). In Fig. \ref{examples_various_topological_cluster}, nodes 5 and 6 are strongly connected and have no incoming edges from other parts of the graph, except from each other. These nodes always belong to the same topological cluster and behave like leader nodes. Based on this observation, we define LSCC as follows:
\begin{definition}  [Leader-like strongly connected component]
A leader-like strongly connected component (LSCC) of a graph is a strongly connected component that has no incoming edges from the other part of the graph.
\end{definition}

We next define LSCC condensation. Since all nodes in an LSCC always converge to an identical value, it is convenient to contract the LSCC into a node when considering topological clusters. For example, the LSCC condensation of Fig. \ref{examples_various_topological_cluster} can be observed in Fig. \ref{fig_LSCC_example}. The nodes 5 and 6 in Fig. \ref{examples_various_topological_cluster} are contracted onto node 5 in Fig. \ref{fig_LSCC_example}. Based on this contraction, LSCC condensation is defined as follows:
\begin{definition} [LSCC condensation]
The LSCC condensation of a graph $\mathcal{G}$ is a graph obtained by contracting each LSCC in $\mathcal{G}$ onto a single node.  
\end{definition}

\begin{figure} % [h]
\centering
\begin{tikzpicture}
\node at (-2.42,2.08) {1};
\draw[thick] (-2.42,2.08) circle (0.21cm);
\node at (-2.91,1.23) {2};
\draw[thick] (-2.91,1.23) circle (0.21cm);
\node at (-1.88,1.23) {3};
\draw[thick] (-1.88,1.23) circle (0.21cm);
\node at (-1.25,0.69) {4};
\draw[thick] (-1.25,0.69) circle (0.21cm);
\node at (0.17,2.31) {5};
\draw[thick] (0.17,2.31) circle (0.21cm);
%\node at (0.62,2.31) {6};
%\draw[thick] (0.62,2.31) circle (0.21cm);
\node at (-0.28,1.4) {6};                               %7
\draw[thick] (-0.28,1.4) circle (0.21cm);
\node at (0.62,1.4) {7};                               %8
\draw[thick] (0.62,1.4) circle (0.21cm);
\node at (-3.3,-0.01) {8};                               %9
\draw[thick] (-3.3,-0.01) circle (0.21cm);
\node at (-2.5,-0.69) {9};                               %10
\draw[thick] (-2.5,-0.69) circle (0.21cm);
\node at (-1.53,-0.76) {10};                               %11
\draw[thick] (-1.53,-0.76) circle (0.21cm);
\node at (-2.08,-1.59) {11};                               %12
\draw[thick] (-2.08,-1.59) circle (0.21cm);
\node at (-0.4,0.12) {12};                               %13
\draw[thick] (-0.4,0.12) circle (0.21cm);
\node at (-0.2,-0.86) {13};                               %14
\draw[thick] (-0.2,-0.86) circle (0.21cm);
\node at (0.54,-0.26) {14};                               %15
\draw[thick] (0.54,-0.26) circle (0.21cm);
\node at (-0.68,-1.89) {15};                               %16
\draw[thick] (-0.68,-1.89) circle (0.21cm);
\draw[thick,->] (-2.52,1.88) -- (-2.8,1.42);             %1->2
\draw[thick,->] (-2.32,1.88) -- (-2,1.41);             %1->3
\draw[thick,->] (-2.69,1.23) -- (-2.1,1.23);                 %2->3
\draw[thick,->] (-1.72,1.09) -- (-1.42,0.83);           %3->4
\draw[thick,->] (-1.95,1.02) -- (-2.41,-0.49);          %3->10
\draw[thick,->] (-1.07,0.57) -- (-0.58,0.24);           %4->13
%\draw[thick,<->] (-0.06,2.31) -- (0.4,2.31);              %5->6
\draw[thick,->] (0.1,2.09) -- (-0.13,1.58);               %5->7
\draw[thick,->] (-0.08,1.4) -- (0.4,1.4);               %7->8
\draw[thick,->] (-0.31,1.19) -- (-0.37,0.33);            %7->13
\draw[thick,->] (-3.14,-0.15) -- (-2.67,-0.56);          %9->10
\draw[thick,<->] (-2.28,-0.71) -- (-1.75,-0.75);          %10->11
\draw[thick,->] (-2.4,-0.89) -- (-2.17,-1.39);             %10->12
\draw[thick,->] (-1.65,-0.94) -- (-1.96,-1.41);             %11->12
\draw[thick,->] (-1.86,-1.64) -- (-0.9,-1.85);         %12->16
\draw[thick,->] (-0.35,-0.1) -- (-0.23,-0.65);          %13->14
\draw[thick,->] (-0.19,0.05) -- (0.33,-0.18);        %13->15
\draw[thick,->] (-0.03,-0.74) -- (0.37,-0.4);          %14->15
\draw[thick,->] (-0.29,-1.05) -- (-0.58,-1.7);             %14->16
\node at (-3.35,0.59) {C};
\draw[gray,thick,dashed] (-3.3,-0.01) circle (0.35cm);
\node at (-3.02,-1.7) {D};
\draw[gray,thick,dashed] (-0.68,-1.89) circle (0.35cm);
\node at (-0.09,-1.99) {F};
\draw[gray,thick,dashed] (-0.04,-0.27) circle (0.95cm);
\node at (-0.68,2.73) {B};
\draw[gray,thick,dashed] (0.17,1.87) circle (1.cm);
\node at (0.96,-0.86) {E};
\draw[gray,thick,dashed] (-2.07,-0.99) circle (0.975cm);
\node at (-1.66,2.44) {A};
\draw[rotate=-38,gray,thick,dashed] (-2.4557,-0.2957) ellipse (1.4cm and 1.cm);
\end{tikzpicture}
\caption{LSCC condensation of network in Fig. \ref{examples_various_topological_cluster}} 
\label{fig_LSCC_example}
\end{figure}
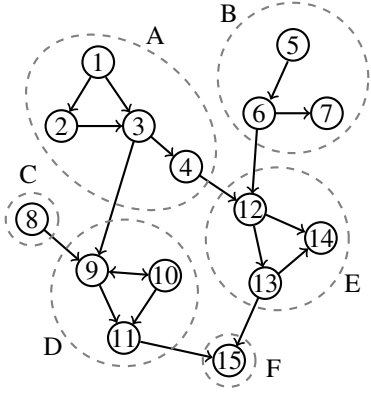

Due to the convergence property of nodes in each LSCC, it is obvious that the topological clusters of \eqref{consensus_protocol_2} over $\mathcal{G}$ are the same as those of the network over the LSCC condensation of $\mathcal{G}$. Accordingly, in this paper, it is assumed that there is no LSCCs in \eqref{consensus_protocol_2} for simplicity. Now, we are ready to define the cluster root and cluster follower nodes as follows:
\begin{definition} [Cluster root and cluster follower nodes] \label{def_CR_CF}
For \eqref{consensus_protocol_2}, a node is a cluster root (CR) node if it is a leader node or the root node that has incoming edges from the outside of its topological cluster. A node is a cluster follower (CF) node if it is not a CR node.
\end{definition}

Considering \textbf{Definition \ref{def_CR_CF}}, it follows from  the second condition of \textbf{Theorem \ref{thm_topological_clusters}} that each topological cluster has exactly one CR node. In Fig. \ref{fig_LSCC_example}, nodes 1, 5, 8, 9, 12, and 15 are CR nodes, and there is one in each cluster. Furthermore, all other nodes are CF nodes. The identification whether it is CR or CF node is crucial for the topological clustering algorithm. The following lemma shows connectivity between CR and CF nodes:

\begin{lemma}
For  network \eqref{consensus_protocol_2}, every CF node has at least one  path from the CR node in its topological cluster.
\label{lem_relation_CR_CF}
\end{lemma}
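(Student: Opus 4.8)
The plan is to reduce the lemma to the single graph-theoretic fact that from the root of a spanning tree every other node is reachable. Fix a CF node $v$ and let $\mathcal{S}$ be the topological cluster containing it; by \textbf{Remark~\ref{remark_topological_cluster}} every node lies in a topological cluster, and, as noted right after \textbf{Definition~\ref{def_CR_CF}}, $\mathcal{S}$ contains exactly one CR node, which I denote $r$. By condition (C1) of \textbf{Theorem~\ref{thm_topological_clusters}}, the induced subgraph $\mathcal{G}[\mathcal{S}]$ admits at least one spanning tree. If I can show that $r$ is the root of such a spanning tree, then by the defining property of a spanning tree every node of $\mathcal{S}$, and in particular the CF node $v$, is reachable from $r$, which is exactly a path from $r$ to $v$.

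It therefore remains to identify $r$ with a spanning-tree root, which I would do by splitting according to \textbf{Definition~\ref{def_CR_CF}} together with condition (C2). First suppose $\mathcal{S}$ is a follower topological cluster, so that some node of $\mathcal{S}$ has an incoming edge from $\mathcal{V}\backslash\mathcal{S}$. By (C2) exactly one such node exists and it is a root node of $\mathcal{G}[\mathcal{S}]$; by \textbf{Definition~\ref{def_CR_CF}} this is precisely the CR node $r$. Being a root node of $\mathcal{G}[\mathcal{S}]$, $r$ roots a spanning tree, and this case closes.

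Next suppose $\mathcal{S}$ is a leader topological cluster, i.e.\ no node of $\mathcal{S}$ has an incoming edge from $\mathcal{V}\backslash\mathcal{S}$. Here \textbf{Definition~\ref{def_CR_CF}} forces the CR node $r$ to be a leader node, so $r$ has no incoming edge at all. Consequently $r$ can never be a non-root (child) node of any spanning tree of $\mathcal{G}[\mathcal{S}]$, since a child must possess an incoming tree edge; hence by (C1) it must be the root of one. The only point requiring care is that such a leader node indeed exists inside $\mathcal{S}$: I would pass to the condensation of $\mathcal{G}[\mathcal{S}]$, which, since $\mathcal{G}[\mathcal{S}]$ has a spanning tree, is a DAG with a unique source strongly connected component; because $\mathcal{S}$ receives no external edges, this source component has no incoming edge in $\mathcal{G}$ whatsoever and would be an LSCC were it nontrivial, so the standing no-LSCC assumption forces it to collapse to a single node, namely the leader node $r$.

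The main obstacle is thus not the reachability conclusion, which is immediate once the root is pinned down, but the bookkeeping that identifies the unique CR node of $\mathcal{S}$ with a root of a spanning tree of $\mathcal{G}[\mathcal{S}]$ in both the follower and leader cases; the leader case in particular leans on the no-LSCC normalization to guarantee that the source of $\mathcal{G}[\mathcal{S}]$ degenerates to a single leader node. Once this identification is in hand, the proof closes in one line via the root-reaches-all property of spanning trees.
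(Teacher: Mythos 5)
Your proof is correct and takes essentially the same route as the paper's: both arguments identify the cluster's unique CR node with the root of a spanning tree of the induced subgraph $\mathcal{G}[\mathcal{S}]$ guaranteed by (C1), and then conclude by root-to-all reachability within that tree. Your version simply makes explicit the leader/follower case split and the no-LSCC existence argument that the paper's terser proof leaves implicit.
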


\begin{proof}
% As discussed in \textbf{Remark} \ref{remark_topological_cluster}, every node in the multi-leader directed consensus network belongs to any one of topological clusters. 
Consider an arbitrary CF node. From \textbf{Definition \ref{def_CR_CF}}, the CF node is neither a leader node nor the root node that has incoming edges from the outside of its topological cluster. It follows from \textbf{Theorem \ref{thm_topological_clusters}} and \textbf{Definition \ref{def_CR_CF}} that the CF node lies on a  spanning tree of the induced subgraph of a topological cluster. Further the root node of the  spanning tree is the CR node of the topological cluster. Therefore, the CF node has at least one  path from the CR node in its topological cluster.
\end{proof}

% {\color{red}\sout{\textbf{Lemma \ref{lem_relation_CR_CF}} indicates that we can classify nodes in a network into CR and CF nodes by checking directed paths among them. To classify nodes as CR or CF node, we check the number of incoming edges as follows:}}

% {\color{red}We now present the theorems for classifying nodes as CR or CF node.}
According to the following theorem, a node with one or zero incoming edges can be classified as a CR or CF node.

\begin{theorem} \label{thm_node_classification_1}
For network \eqref{consensus_protocol_2} over the LSCC condensation of a digraph, the following is true:
\begin{itemize}
% \item Every {\color{red}node that has no incoming edges} is a CR node.
% \item Every {\color{red}node that has exactly one incoming edge} is a CF node.
\item A node without an incoming edge is a CR node.
\item A node with only one incoming edge is a CF node.
\end{itemize}
\end{theorem}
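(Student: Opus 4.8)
The plan is to dispatch the two bullets separately, the first being essentially immediate and the second carrying the real content. For the first claim, a node $v$ without an incoming edge has $\mathcal{N}_v = \emptyset$, so by the preliminaries it is a leader node; \textbf{Definition \ref{def_CR_CF}} then classifies it as a CR node at once, and no further work is needed.

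For the second claim I would argue by directly tracking the convergence value of a node $v$ whose neighbor set is the singleton $\mathcal{N}_v = \{u\}$. Using the fact recorded in Section~\ref{sec_clusters} that every node of \eqref{consensus_protocol_2} converges to a steady state, write $x_u(t) \to x_u^{ss}$. The scalar dynamics of $v$ then reduces to
\begin{align*}
 \dot{x}_v = a_{vu}(x_u - x_v).
\end{align*}
Setting $e = x_v - x_u^{ss}$ gives $\dot e = -a_{vu} e + a_{vu}(x_u - x_u^{ss})$, a stable first-order system with $a_{vu} > 0$ and vanishing forcing $x_u - x_u^{ss} \to 0$; hence $x_v \to x_u^{ss}$. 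The essential point is that this limit equals the value of $u$ for \emph{every} positive weight $a_{vu}$, so $v$ and $u$ converge to an identical value independently of the edge weights.

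I would then invoke \textbf{Definition \ref{def_topological_cluster}} together with the maximality it builds in. Since $u$ belongs to some topological cluster $\mathcal{S}_u$ (every node does, by \textbf{Remark \ref{remark_topological_cluster}}) and $v$ shares the weight-independent limit of $u$, maximality forces $v \in \mathcal{S}_u$: otherwise $\mathcal{S}_u \cup \{v\}$ would be a strictly larger set of nodes converging to a common value for all positive weights. Consequently the unique incoming edge $(u,v)$ is internal to $\mathcal{S}_u$, so $v$ has no incoming edge from $\mathcal{V}\backslash\mathcal{S}_u$. As $v$ is moreover not a leader node, neither clause of \textbf{Definition \ref{def_CR_CF}} applies, and $v$ is therefore a CF node.

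The subtle points, rather than genuine obstacles, are twofold. First, one must establish the weight-\emph{independence} of the limit and not merely convergence, since it is precisely this that lets $v$ be merged into $\mathcal{S}_u$ through maximality. Second, the LSCC-condensation hypothesis is doing real work: it excludes degenerate configurations such as a two-node strongly connected leader component (each of whose nodes has a single incoming edge), where the CR/CF dichotomy would be ambiguous. Under the condensation every topological cluster has a well-defined root, and by the second bullet of \textbf{Remark \ref{remark_topological_cluster}} a follower cluster receives its external edges from two or more clusters; hence a cluster root carries either zero or at least two incoming edges, so a single-incoming-edge node can never be a root. This structural observation serves as a consistency check and could in principle replace the dynamical computation, but I expect the convergence argument above to yield the cleanest self-contained proof.
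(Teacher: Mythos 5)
Your proof is correct, and while its overall skeleton matches the paper's --- first bullet immediate from the definitions, second bullet by merging the single-in-edge node into its neighbor's topological cluster via maximality and then observing that no clause of \textbf{Definition \ref{def_CR_CF}} can make it a CR node --- you justify the merging step by a genuinely different mechanism. The paper is purely graph-theoretic: it invokes the maximality condition of \textbf{Theorem \ref{thm_topological_clusters}}, implicitly checking that appending $i$ to its neighbor's cluster preserves (C1) (the spanning tree extends along the edge $(j,i)$) and (C2) ($i$ brings no external incoming edges, since its only incoming edge starts inside). You bypass \textbf{Theorem \ref{thm_topological_clusters}} entirely: the first-order tracking computation $\dot e = -a_{vu}e + a_{vu}(x_u - x_u^{ss})$ shows $x_v \to x_u^{ss}$ for \emph{every} positive weight assignment, and then maximality in the sense of \textbf{Definition \ref{def_topological_cluster}} alone forces $v \in \mathcal{S}_u$. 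What your route buys is self-containedness --- it needs only the definition, the fact recorded in Section~\ref{sec_clusters} that all states of \eqref{consensus_protocol_2} converge, and elementary ODE reasoning --- at the cost of being longer than the paper's one-line appeal to the characterization theorem; conversely, the paper's route makes visible exactly which structural conditions ((C1), (C2)) survive the enlargement, which your dynamical argument re-derives implicitly. Your closing observation about the LSCC hypothesis is apt and sharper than anything the paper makes explicit: in a two-node LSCC each node has exactly one incoming edge and either proof would classify both as CF, leaving the cluster with no CR node, so the condensation hypothesis is genuinely load-bearing. One small caveat there: your ``zero or at least two incoming edges'' consistency check, as stated, elides a root with exactly one \emph{internal} incoming edge; ruling that out again requires the no-LSCC hypothesis (such a root's strongly connected component would be an LSCC), so that side remark needs the same assumption you already identified, while your main dynamical argument stands as is.
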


\begin{proof}
First, a node without an incoming edge is a leader node. It follows from \textbf{Definition \ref{def_CR_CF}} that the node is a CR node.

Second, let $i$ be a node with only one incoming edge. If $j$ is a neighbor of $i$, it follows from the maximality condition in \textbf{Theorem \ref{thm_topological_clusters}} that both $i$ and $j$ belong to the same topological cluster. Suppose that $i$ is the CR node of the topological cluster. Since $i$ has no incoming edge from the outside of the topological cluster, it is a leader node, which is a contradiction. Therefore, a node with only one incoming node is a CF node.
\end{proof}

For example, nodes 1, 5, and 8 in Fig. \ref{fig_LSCC_example} have no incoming edges and they are CR nodes. On the other hand, nodes 2, 4, 6, 7, 10, and 13 have only one incoming edge and they are classified as CF nodes. However, it can be quite challenging to classify nodes with two or more incoming edges as either CR or CF nodes. To this end, we first define popular nodes as follows:
\begin{definition} [Popular node] \label{def_popular_node}
For  network \eqref{consensus_protocol_2}, a node with two or more incoming edges is called a popular node.
\end{definition}

According to \textbf{Definition \ref{def_popular_node}}, nodes 3, 9, 11, 12, 14, and 15 in Fig. \ref{fig_LSCC_example} are popular nodes. Next, to propose a theorem classifying these popular nodes, the following lemma is necessary:
\begin{lemma}
For a graph $\mathcal{G}=(\mathcal{V},\mathcal{E})$ that has no LSCC, every node that is not a leader node always has at least one path from a leader node.
\label{lem_popular_node}
\end{lemma}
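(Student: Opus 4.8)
The plan is to argue by contradiction, reducing the claim to the absence of LSCCs via a reachability set. Suppose some non-leader node $v$ admits no path from any leader node. First I would collect all ancestors of $v$: let $\mathcal{R}$ be the set consisting of $v$ together with every node $w$ from which there is a path to $v$. By the contradiction hypothesis, $\mathcal{R}$ cannot contain a leader node, since a path from a leader to $v$ is precisely what we assumed does not exist; hence every node of $\mathcal{R}$ is a follower node and therefore has at least one incoming edge.

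Next I would show that $\mathcal{R}$ receives no edges from outside. If $w \in \mathcal{R}$ and $(u,w) \in \mathcal{E}$, then $u$ reaches $v$ through $w$, so $u \in \mathcal{R}$; thus every in-neighbor of a node in $\mathcal{R}$ again lies in $\mathcal{R}$, i.e., no node of $\mathcal{R}$ has an incoming edge from $\mathcal{V}\backslash\mathcal{R}$. Consequently, inside the induced subgraph $\mathcal{G}[\mathcal{R}]$ every node still has in-degree at least one, because all of its incoming edges (of which there is at least one) originate within $\mathcal{R}$.

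I would then pass to the condensation of $\mathcal{G}[\mathcal{R}]$, which is acyclic and therefore has a source strongly connected component $C_0$, namely one with no incoming edges from the rest of $\mathcal{R}$. Combined with the previous step, $C_0$ has no incoming edges from anywhere in $\mathcal{V}$, so it is a strongly connected component of $\mathcal{G}$ with no incoming edges from the other part of the graph. Because the graph model contains no self-loops, a single-node $C_0$ would force its unique node to have in-degree zero and hence be a leader node, contradicting $C_0 \subseteq \mathcal{R}$; therefore $C_0$ contains at least two nodes and is, by definition, an LSCC. This contradicts the hypothesis that $\mathcal{G}$ has no LSCC, completing the argument.

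The main obstacle, and the step deserving the most care, is this last one: ruling out a trivial (single-node) source component. It is tempting to declare any source component an LSCC, but a singleton with no incoming edges is simply a leader node, which $\mathcal{R}$ has already been shown to exclude; the no-self-loop convention of the graph model is exactly what guarantees that a follower node cannot form such a singleton, forcing $C_0$ to be a genuine multi-node LSCC.
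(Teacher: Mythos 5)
Your proof is correct and takes essentially the same route as the paper's: form the ancestor set of the offending node, observe that it is closed under incoming edges and contains no leader node (so every node in it has in-degree at least one inside the set), and extract a source strongly connected component that must then be an LSCC, contradicting the hypothesis. Your write-up is in fact tighter than the paper's own -- where the paper argues loosely that nodes in the set ``form directed cycles'' and that ``applying the same argument repeatedly'' produces an LSCC, you invoke acyclicity of the condensation to obtain a source component $C_0$ and explicitly rule out the degenerate single-node case via the no-self-loop convention, which is exactly the point the paper glosses over.
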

\begin{proof}
Consider an arbitrary node $i \in \mathcal{V}$, which is not a leader node. Let $\mathcal{S} \subset \mathcal{V}$ be the set of all nodes that have at least one path to the node $i$. Accordingly, $\mathcal{S}$ has no incoming edge from the outside. We prove the lemma by contradiction. Suppose that every node in $\mathcal{S}$ is not a leader node, which means that every node in $\mathcal{S}$ has at least one incoming edge from the other nodes. Obviously, the out-degree of each nodes in $\mathcal{S}$ is more than or equal to 1. Hence, a node in $\mathcal{S}$ forms at least one directed cycle together with some other nodes, which means that each node in $\mathcal{S}$ belongs to a strongly connected component. Denote the strongly connected components by $\mathcal{S}_1, \mathcal{S}_2, ..., \mathcal{S}_N$. Due to the assumption that there is no LSCC in $\mathcal{G}$, each strongly connected component has at least one incoming edge from one of the other strongly connected components. Applying the same argument repeatedly, it can be concluded that some of the strongly connected components form an LSCC, which is a contradiction.
% Similarly, some strongly connected components can continue to form a directed cycle until there are no more connected strongly connected components in $\mathcal{S}$. As mentioned above, since $\mathcal{S}$ has no incoming edges from the outside, it can be concluded that finally formed strongly connected components become an LSCC, which is a contradiction. 
Therefore, there exists at least one leader node in $\mathcal{S}$, which implies that node $i$ has at least one path from a leader node.

\end{proof}

%\begin{corollary}
%{\color{red}For a graph $\mathcal{G}$ that has no LSCC, there always exist at least one leader node, and %every popular node has at least one path from any leader node.}
%\label{cor_popular_node}
%\end{corollary}

Based on \textbf{Lemma \ref{lem_popular_node}}, we propose a theorem that classifies the popular nodes into CR and CF nodes.

\begin{theorem} \label{thm_common_node_finder}
For  network \eqref{consensus_protocol_2}, a popular node is a CF node if and only if all acyclic paths from any leader node to the popular node contain at least one common node other than the popular node.
\end{theorem}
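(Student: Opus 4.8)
The plan is to read the condition as a statement about a graph-theoretic bottleneck: ``all acyclic paths from any leader node to $p$ contain a common node other than $p$'' says exactly that some node $c \neq p$ is a \emph{dominator} of $p$ with respect to the set of leader nodes, i.e.\ $c$ lies on every acyclic leader-to-$p$ path. By \textbf{Lemma \ref{lem_popular_node}} such paths exist, since a popular node is not a leader, so the statement is never vacuous. I would then prove the two implications separately, the forward one directly and the converse by contraposition.

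For the forward direction (CF $\Rightarrow$ dominator), suppose $p$ is a CF node. By \textbf{Definition \ref{def_CR_CF}} it lies in a topological cluster $\mathcal{S}$ whose unique CR node is its root $r \neq p$. I would invoke condition (C2) of \textbf{Theorem \ref{thm_topological_clusters}}: either $\mathcal{S}$ receives no edge from $\mathcal{V}\backslash\mathcal{S}$ (a leader cluster, in which case $r$ is the only leader of $\mathcal{S}$ while every other node of $\mathcal{S}$ already has an in-neighbour inside $\mathcal{S}$), or $r$ is the only node of $\mathcal{S}$ with an incoming edge from $\mathcal{V}\backslash\mathcal{S}$. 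In both cases every leader lies outside $\mathcal{S}$ or coincides with $r$, so any acyclic path from a leader to $p \in \mathcal{S}$ must enter $\mathcal{S}$ through $r$; acyclicity forbids re-entering $\mathcal{S}$ once it has been left (that would revisit $r$), so $r$ appears on the path. Hence $r$ is the required common node.

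For the converse I would argue the contrapositive: assuming $p$ is a CR node (hence, being popular and not a leader, the root of a follower topological cluster), I must exhibit two acyclic leader-to-$p$ paths sharing no node but $p$. By \textbf{Remark \ref{remark_topological_cluster}} the incoming edges of $p$ come out of two or more \emph{distinct} topological clusters; pick edges $(u_1,p)$ and $(u_2,p)$ with $u_1\in\mathcal{C}_1$, $u_2\in\mathcal{C}_2$, $\mathcal{C}_1\neq\mathcal{C}_2$. Since topological clusters partition $\mathcal{V}$ into value-classes, $\mathcal{C}_1$ and $\mathcal{C}_2$ are vertex-disjoint, and by (C1) each $\mathcal{C}_i$ contains an internal $r_i$-to-$u_i$ path along its spanning tree. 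The task then reduces to extending these to leader-to-$u_i$ paths that remain disjoint.

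The main obstacle is precisely this disjointness: a priori the ancestries of $\mathcal{C}_1$ and $\mathcal{C}_2$ could overlap, and a shared ancestor would itself dominate $p$. I would rule this out with the key observation that \emph{two distinct topological clusters cannot share a single-node dominator}: if some $c$ lay on every leader-path into both $\mathcal{C}_1$ and $\mathcal{C}_2$, then $c$ would separate each $u_i$ from all leaders, and by the same consensus argument used in the sufficiency part of \textbf{Theorem \ref{thm_topological_clusters}} every node isolated below such a cut converges to $c$'s value; thus $\mathcal{C}_1$, $\mathcal{C}_2$, and hence $p$ would converge to one common value for all weights, forcing them into a single cluster and contradicting that $p$ is a follower root fed by two distinct clusters. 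With no common dominator of $p$, a Menger-type vertex-connectivity argument (adjoin a super-source joined to all leaders) yields at least two internally vertex-disjoint leader-to-$p$ paths, which is the desired conclusion; alternatively I would construct the two paths inductively, tracing each $r_i$ back through successive upstream clusters and using the ``distinct clusters have independent values'' property to guarantee that the two traversals reach leaders without colliding.
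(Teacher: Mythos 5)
Your forward direction (CF $\Rightarrow$ common node) is essentially the paper's own necessity argument: the CR node $r$ of the cluster containing the popular node is the only possible entry point for paths coming from outside that cluster, so it lies on every acyclic leader-to-$p$ path; there is no real difference there. For the converse, the paper argues directly rather than by contraposition: assuming all acyclic leader-to-$i$ paths share a common node, it takes the \emph{farthest} such common node $j$ from $i$, forms the set $\mathcal{N}$ of nodes lying on acyclic $j$-to-$i$ paths, uses the farthest-ness of $j$ to show that only $j$ receives edges from outside $\mathcal{N}$, so that $\mathcal{N}$ satisfies (C1)--(C2) of \textbf{Theorem \ref{thm_topological_clusters}}, and concludes that $j$ is the CR node of the cluster containing $i$, hence $i$ is CF. Your contrapositive route---a popular CR node $p$ is fed by two distinct clusters $\mathcal{C}_1\neq\mathcal{C}_2$ by \textbf{Remark \ref{remark_topological_cluster}}, and a dominator $c\neq p$ would force $u_1$, $u_2$ and everything cut off by $c$ to a single convergence value, merging $\mathcal{C}_1$ and $\mathcal{C}_2$---is sound, but two caveats are worth recording. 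First, your central step, ``every node isolated below the cut converges to $c$'s value,'' is not free: it requires exactly the cut-set verification of (C1)--(C2) that constitutes the paper's direct proof (only $c$ receives edges from outside the cut-off set, and $c$ reaches each node of that set within it, which takes a short path-surgery argument), plus the unstated step that a dominator of $p$ also dominates $u_1$ and $u_2$ (extend a leader-to-$u_i$ path across the edge $(u_i,p)$ and delete cycles). So your route relocates, rather than avoids, the paper's core argument, while layering cluster-disjointness and the Remark on top of it. Second, the Menger/super-source step is superfluous: the negation of ``all acyclic leader-to-$p$ paths share a node other than $p$'' is precisely the nonexistence of a dominator $c\neq p$, so once your key observation is established you are finished; producing two internally vertex-disjoint paths is a stronger certificate the theorem never asks for.
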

\begin{proof}
(Sufficiency) Let $i$ be a popular node and assume that all acyclic paths from a leader node to $i$ share at least one common node other than $i$. Let $j$ be the farthest node from $i$ among the common nodes. Denote by $\mathcal{N}$ the set of nodes contained in all acyclic paths from $j$ to $i$. If there exist any incoming edges from the outside of $\mathcal{N}$ to nodes in $\mathcal{N}$ other than $j$, $j$ cannot be the farthest common node, which is a contradiction. This means that only $j$ has incoming edges from the outside of $\mathcal{N}$. Further $j$ has at least one directed path to every other node in $\mathcal{N}$. That is, $\mathcal{N}$ satisfies the conditions (C1) and (C2) of \textbf{Theorem \ref{thm_topological_clusters}}. In general, $\mathcal{N}$ does not satisfy the maximality condition of \textbf{Theorem \ref{thm_topological_clusters}}. Yet, as discussed in  \textbf{Remark \ref{remark_topological_cluster}}, all nodes in $\mathcal{N}$ belong to the same topological cluster. Assume that node $k$ is the CR node of the topological cluster. It follows from \textbf{Theorem \ref{thm_topological_clusters}} that there is a directed path from $k$ to every other node in $\mathcal{N}$. Since every incoming edge from outside of the topological cluster can be connected to only $k$, $k$ is the farthest common node of all acyclic paths from a leader node to $i$, which implies that $k$ corresponds to $j$. Therefore, $j$ is a CR node of the topological cluster and thus $i$ is a CF node.

(Necessity) Let $i$ be a popular node and assume that it is a CF node. Let $j$ be the CR node of the topological cluster to which $i$ belongs. It follows from \textbf{Theorem \ref{thm_topological_clusters}} that there exists at least one directed path from $j$ to every other node in the topological cluster. Further only $j$ has incoming edges starting from outside of the topological cluster. This implies that $j$ is a common node of all acyclic paths from a leader node to $i$, which completes the proof.
\end{proof}

According to the \textbf{Theorem \ref{thm_common_node_finder}}, nodes 3, 11, and 14 in Fig. \ref{fig_LSCC_example} are CF nodes. For instance, node 11 has many acyclic paths from any leader node, and node 9 is a common node in all of those paths. On the other hand, node 9, 12, and 15 in Fig. \ref{fig_LSCC_example} are CR nodes because they have no common node in all of those paths.

Based on the above discussions, we present the following topological clustering algorithm, which takes the graph of network \eqref{consensus_protocol_2} and yields information about topological clusters:  % The cluster search algorithm is as follows:
\begin{enumerate}
    \item Obtain LSCC condensation for given graph. 
%    \item Classify nodes according to the number of incoming edges.
    \item Classify the nodes that have no incoming edges or exactly one incoming edge as CR or CF node, respectively.
    \item Find all acyclic paths from a leader node to each popular node.
    \begin{enumerate}
        \item If there exists a common node that the paths share other than the popular node, classify the popular node as a CF node.
        \item Otherwise, classify the popular node as a CR node.
    \end{enumerate}
    \item Identify all topological clusters based on graph connectivity and the classification found in the above steps.
\end{enumerate}

In the above topological clustering algorithm, the LSCC condensation of $\mathcal{G}$ can be obtained based on the existing algorithms that search strongly connected components \cite{cormen2009introduction}. By checking the connectivity of each strongly connected component with the other part of the graph, LSCC condensation can be obtained. The pseudo-code of the topological clustering algorithm is provided in \textbf{Algorithm \ref{alg_classification}} and \textbf{Algorithm \ref{alg_searching}}. It is assumed that the graph $\mathcal{G}$ has no LSCCs for simplicity. In \textbf{Algorithm \ref{alg_classification}}, each node is classified into a CR or CF node. The set of each CR node and its corresponding CF nodes is then identified as a topological cluster. This requires searching all acyclic paths from a leader node to every popular node as implemented in \textbf{Algorithm \ref{alg_searching}}. Note that \textbf{Algorithm \ref{alg_searching}} is called in line 15 of \textbf{Algorithm \ref{alg_classification}}.

\begin{algorithm}
	\caption{Search topological clusters}
	\begin{algorithmic}[1]
        \State{\textbf{Input}: graph $\mathcal{G}=(\mathcal{V},\mathcal{E})$}
%        \State Perform LSCC condensation to $\mathcal{G}=(\mathcal{V},\mathcal{E})$.
        \State Initialize $\mathcal{P}$, the set of popular nodes
        \State Initialize $\mathcal{L}$, the set of leader nodes
		\For {each $v$ in $\mathcal{V}$} 
            \If {$v$ has no incoming edge}
                \State Classify $v$ as a CR node, and add $v$ to $\mathcal{L}$
%    	         \State Add $v$ to $\mathcal{C}_1$
            \ElsIf {$v$ has only one incoming edge}
%                \State Add $v$ to $\mathcal{C}_2$
                \State Classify $v$ as a CF node
            \ElsIf {$v$ has more than one incoming edges}
                \State Add $v$ to $\mathcal{P}$
            \EndIf
		\EndFor
%    \State Classify each node in $\mathcal{C}_1$ as a CR node
%    \State Classify each node in $\mathcal{C}_2$ as a CF node
		\For {each node $v$ in $\mathcal{P}$}
		    % \State Initialize list $\mathcal{D}_{v}$ and add $v$ to $\mathcal{D}_{v}$
            \State Obtain $\mathcal{M}_{v}$ by calling \textbf{Algorithm \ref{alg_searching}} with $v$, $\mathcal{G}$, $\mathcal{L}$, and $\mathcal{D}_{v} = \{v\}$
            \If {all rows of $\mathcal{M}_{v}$ share a common node other than $v$}
            %all cycle-free reverse-directed paths from $v$ has a common node
                \State Classify $v$ as a CF node
            \ElsIf {all rows of $\mathcal{M}_{v}$ do not share a common node other than $v$}
                \State Classify $v$ as a CR node
            \EndIf
		\EndFor
    % \State Gather CF nodes that have same nearest CR node
    \State Identify the set of each CR node and its corresponding CF nodes as a topological cluster
    %22번 줄은 모든 노드를 CR/CF 노드로 분류한 이후 각 노드를 묶어주는 마무리 단계입니다. 각 CF 노드를 reverse-directed path 상에서 가장 가까운 CR 노드와 clustering 합니다.
    \State{\textbf{Output}: topological clusters}
	\end{algorithmic} 
    \label{alg_classification}
\end{algorithm} 

\textbf{Algorithm \ref{alg_searching}} takes a popular node $v$ as input and identifies all acyclic paths that start from a leader node and end at $v$. It returns a matrix with rows containing the information of the acyclic paths.

\begin{algorithm}
	\caption{Search acyclic paths} 
	\begin{algorithmic}[1]
	    \State{\textbf{Input}: popular node $v$, graph $\mathcal{G}$, set of leader nodes $\mathcal{L}$, and a set $D_{v}$ for checking duplication of nodes}
	    \State Initialize $\mathcal{M}_v$, matrix with paths from leader nodes to  $v$ in rows
		\For {each $e$ in $\mathcal{E}$}
		    \State {Find a node $n$ such that $e=(v,n) \in \mathcal{E}$}
		    \If {$n$ $\notin$ $\mathcal{D}_{v}$}
		        \If{$n \in \mathcal{L}$}
%		            \State Add $n$ to $\mathcal{R}_v$
                    \State Append a path $v,n$ to an empty row of $\mathcal{M}_v$
%		            \State Add a path $v,n$ to $\mathcal{M}_v$ as an additional row
		        \ElsIf {$n \notin \mathcal{L}$}
%		            \State Add $n$ to $\mathcal{D}_{v}$
		            \State Obtain $\mathcal{M}_{n}$ by calling \textbf{Algorithm \ref{alg_searching}} with $n$, $\mathcal{G}$, $\mathcal{L}$, and $\mathcal{D}_{n} = \mathcal{D}_{v} \cup \{n\}$
%		            \State Add nodes of $\mathcal{R}_n$ to $\mathcal{R}_v$
%		            \State Append $v$ to first of every row of $\mathcal{M}_n$
		            \State Insert $v$ to the first of every row of $\mathcal{M}_n$
		            \State Append each row of $\mathcal{M}_n$ to the empty rows of $\mathcal{M}_v$
		        \EndIf
		    \EndIf
		\EndFor
	    \State{\textbf{Output}: $\mathcal{M}_v$}
	\end{algorithmic} 
    \label{alg_searching}
\end{algorithm}

\section{Example}    \label{sec_examples}
We now provide examples that illustrate the analysis and use of the topological clustering algorithm. 

%This algorithm takes the graph of network \eqref{consensus_protocol_2} and yields information about members of each topological cluster and their CR node. 

\subsection{Comparison with state trajectory}

In the first example, we identify topological clusters in the network shown in \textbf{Fig.} \ref{fig_LSCC_example}, which is the LSCC condensation of the network shown in \textbf{Fig.} \ref{examples_various_topological_cluster}.
% \textbf{Fig.} \ref{fig_ex1_graph} shows the LSCC condensation of the graph shown in \textbf{Fig.} \ref{examples_various_topological_cluster}. 
According to \textbf{Algorithm \ref{alg_classification}} presented in \textbf{Section} \ref{sec_clustering_algorithm}, the set of leader nodes is $\mathcal{L} = \{1,5,8\}$. Further, $\{2,4,6,7,10,13\}\subset \mathcal{N}^{CF}$, where $\mathcal{N}^{CF}$ is the set of CF nodes. Subsequently, the set of popular nodes is given as $\mathcal{P} = \{3,9,11,12,14,15\}$. By \textbf{Algorithm \ref{alg_searching}} and \textbf{Theorem \ref{thm_common_node_finder}}, the popular nodes are identified as CR or CF nodes: $\{9,12,15\}\subset \mathcal{N}^{CR}$, $\{3,11,14\}\subset \mathcal{N}^{CF}$, where $\mathcal{N}^{CR}$ is the set of CR nodes. The identification result is as follows:
\begin{align*}
\mathcal{N}^{CR} &= \{1,5,8,9,12,15\}, \\
\mathcal{N}^{CF} &= \{2,3,4,6,7,10,11,13,14\}
\end{align*}
By matching each CR node with the appropriate CF nodes, the topological clusters are identified. The clustering result is summarized in \textbf{Table} \ref{table_Classification_result_ex1}.

\begin{table}
\caption{Clustering result of example A}
\centering
\begin{tabular}{c|c|p{2.5cm}}\toprule \hline
\renewcommand{\tabcolsep}{0.7mm}	% 이 명령은 표의 좌우 여백을 숫자로 조정한다.
Cluster no. &  CR node &  CF node \\ \hline\hline
1		&	1		& 2, 3, 4  \\ \hline
2		&	5		& 6, 7  \\ \hline
3		&	8		& - \\ \hline
4		&	9		& 10, 11  \\ \hline
5		&	12		& 13, 14 \\ \hline
6		&	15		& - \\ \hline
\bottomrule 
\end{tabular}
\label{table_Classification_result_ex1}
\end{table}

%\begin{figure}
%\begin{center}
%\includegraphics[width=1\linewidth]{}
%\end{center}
%\caption{State trajectory of nodes in Example A}
%\label{fig_ex1_trajectory}
%\end{figure}

\begin{figure}% [h]
\centering
\begin{subfigure}{1\linewidth}
\centering
\includegraphics[width=0.95\linewidth]{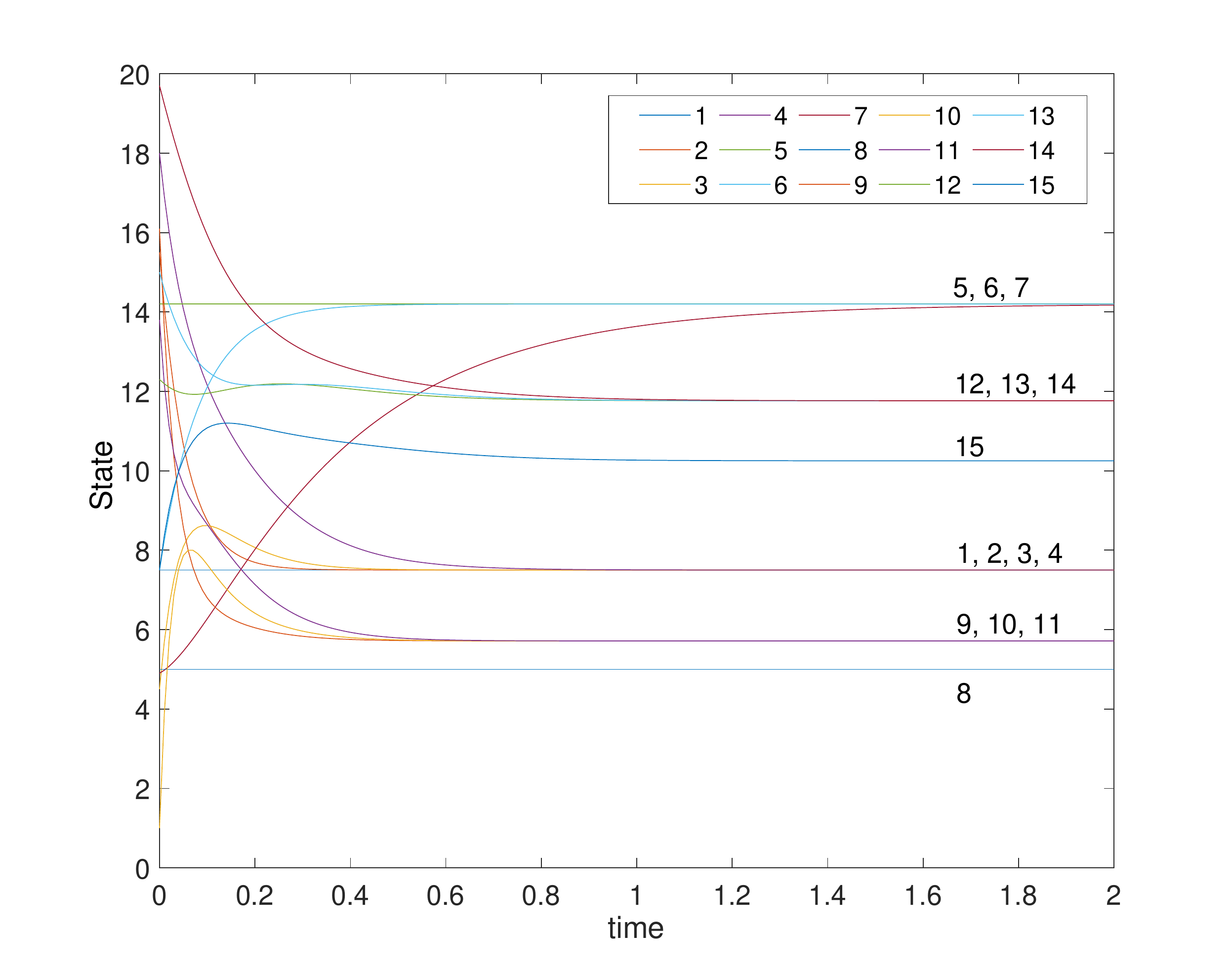}
\caption{State trajectory with random edge weight set 1}
\label{fig_ex1_trajectory1}
\end{subfigure}
\vspace{1mm}
\begin{subfigure}{1\linewidth}
\centering
\includegraphics[width=0.95\linewidth]{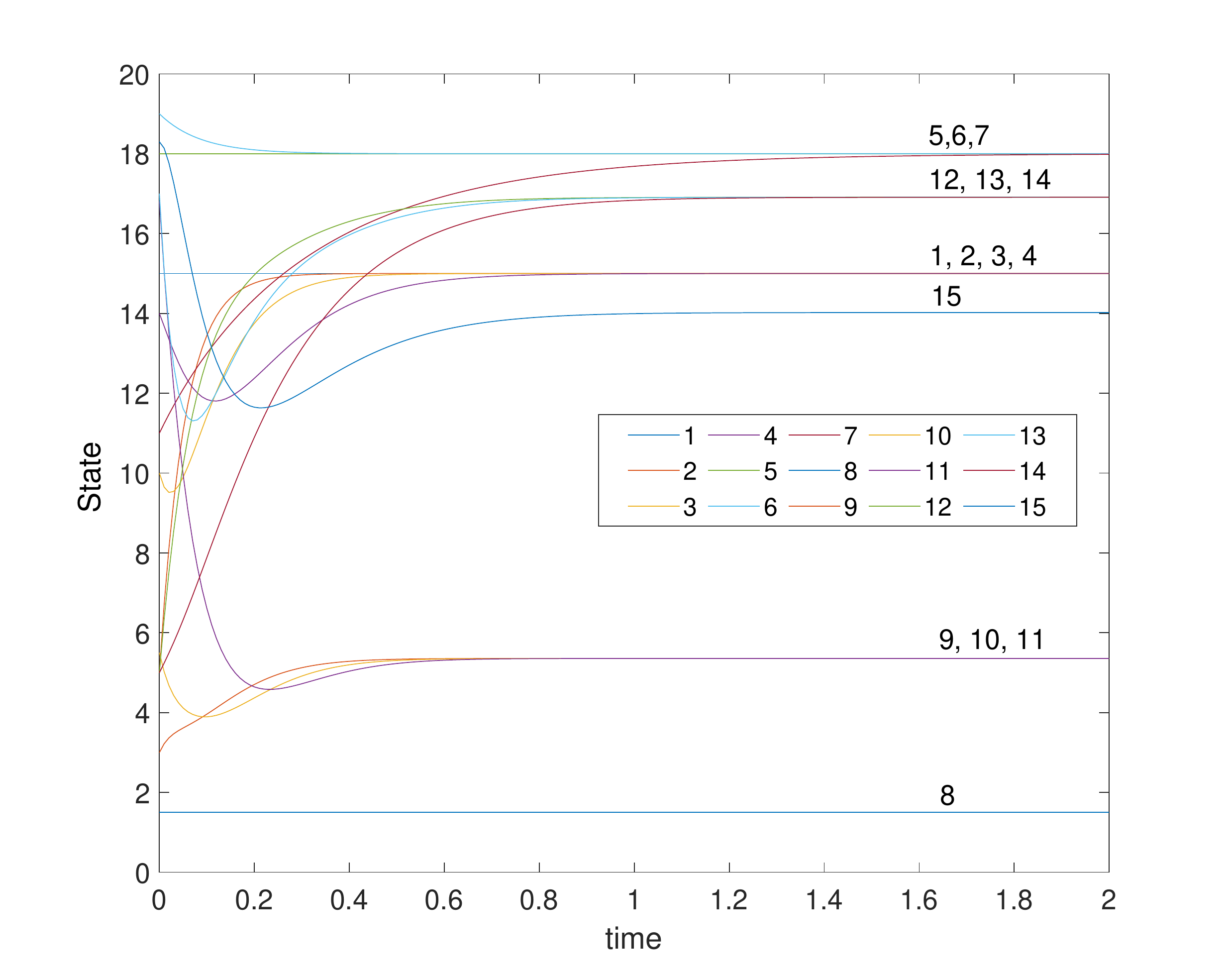}
\caption{State trajectory with random edge weight set 2}
\label{fig_ex1_trajectory2}
\end{subfigure}
\caption{State trajectories of network in example A}
\label{fig_ex1_trajectory}
\end{figure}

To validate the result in \textbf{Table} \ref{table_Classification_result_ex1}, we investigate the state trajectory of the network by simulation. For the network shown in \textbf{Fig.} \ref{fig_LSCC_example}, we randomly selected initial state values of nodes between 1 and 20, and edge weights were chosen as random positive numbers less than 5. \textbf{Fig.} \ref{fig_ex1_trajectory} illustrates the state trajectories for two cases. In \textbf{Fig.} \ref{fig_ex1_trajectory1} and \textbf{Fig.} \ref{fig_ex1_trajectory2}, we can observe that the nodes belong to a topological cluster converge to an identical value for both cases.

\subsection{Comparison with existing works}

\begin{figure} % [h]
\centering
\begin{tikzpicture}
\draw[thick] (-1.2,1) circle (0.21cm);          %1
\node at (-1.2,1) {1};
\draw[thick] (0,1) circle (0.21cm);             %2
\node at (0,1) {2};
\draw[thick] (-0.6,0) circle (0.21cm);          %3
\node at (-0.6,0) {3};
\draw[thick] (2,1) circle (0.21cm);           %4
\node at (2,1) {4};
\draw[thick] (1.2,0.3) circle (0.21cm);           %5
\node at (1.2,0.3) {5};
\draw[thick] (2.7,-0.1) circle (0.21cm);        %6
\node at (2.7,-0.1) {6};
%\draw[thick] (-0.7,-1.3) circle (0.21cm);       %7
%\node at (-0.7,-1.3) {7};
\draw[thick] (-1.2,-1.9) circle (0.21cm);       %7
\node at (-1.2,-1.9) {7};
\draw[thick] (0.6,-1.7) circle (0.21cm);        %8
\node at (0.6,-1.7) {8};
\draw[thick] (2.3,-1.9) circle (0.21cm);        %9
\node at (2.28,-1.9) {9};
\draw[thick,<->] (-0.99,1) -- (-0.21,1);        %1<->2
\draw[thick,->] (-1.1,0.83) -- (-0.7,0.17);     %1->3
\draw[thick,<->] (1.84,0.87) -- (1.36,0.43);    %4<->5
\draw[thick,->] (-0.65,-0.21) -- (-1.15,-1.7);    %3->7
\draw[thick,->] (-0.495,-0.18) -- (0.495,-1.52);      %3->9
\draw[thick,->] (-0.41,-0.12) -- (2.14,-1.765);   %3->10
\draw[thick,->] (1.05,0.15) -- (-1.05,-1.75);        %5->7
\draw[thick,->] (1.13,0.1) -- (0.67,-1.5);        %5->8
\draw[thick,->] (1.27,0.1) -- (2.23,-1.7);        %5->9
\draw[thick,->] (2.5,-0.17) -- (-0.98,-1.85);        %6->7
\draw[thick,->] (2.54,-0.235) -- (0.76,-1.565);        %6->8
\draw[thick,->] (2.66,-0.306) -- (2.34,-1.694);        %6->9
%\draw[thick,->] (-0.85,-1.45) -- (-1.15,-1.85);        %7->8
%\draw[gray,thick,dashed] (-0.6,0.66) circle (1cm);
%\node at (0,0.33) {$\mathcal{H}_1$};
%\draw[rotate=-45,gray,thick,dashed] (0.63,1.6) ellipse (0.9cm and 0.5cm);
%\node at (1.8,1.5) {$\mathcal{H}_2$};
%\draw[rotate=22,gray,thick,dashed] (0.12,-1.15) ellipse (1.1cm and 0.5cm);
%\node at (0.55,-1.3) {$\mathcal{C}$};
\end{tikzpicture}
\caption{Network topology in example B} 
\label{examples_topological_cluster_2}
\end{figure}
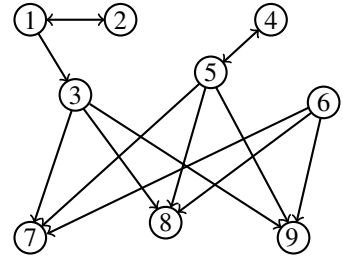

In the second example, we compare a topological cluster and a cluster obtained based on the method proposed in \cite{MONACO201953}. Consider a network in \textbf{Fig.} \ref{examples_topological_cluster_2} with a weight matrix $A = [a_{ij}] \in \mathbb{R}^{N \times N}$, where $\forall i,j \in \mathcal{V}, a_{ij} \geq 0$. To clearly illustrate the difference between the two concepts, we randomly choose the edge weights, but make sure they satisfy the following condition:
\begin{align*} 
a_{7,3}=a_{8,3}=a_{9,3}=2\\
a_{7,5}=a_{8,5}=a_{9,5}=3\\
a_{7,6}=a_{8,6}=a_{9,6}=5
\end{align*}

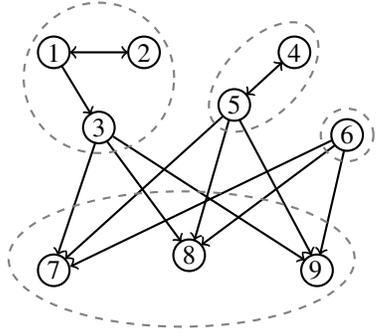
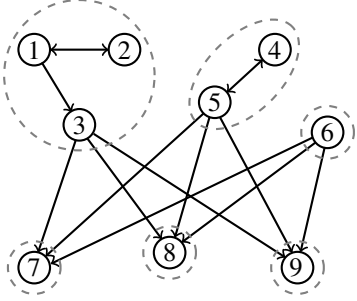
\begin{figure}% [h]
\centering
\begin{subfigure}[b]{0.6\linewidth}
\centering
\begin{tikzpicture}
\draw[thick] (-1.2,1) circle (0.21cm);          %1
\node at (-1.2,1) {1};
\draw[thick] (0,1) circle (0.21cm);             %2
\node at (0,1) {2};
\draw[thick] (-0.6,0) circle (0.21cm);          %3
\node at (-0.6,0) {3};
\draw[thick] (2,1) circle (0.21cm);           %4
\node at (2,1) {4};
\draw[thick] (1.2,0.3) circle (0.21cm);           %5
\node at (1.2,0.3) {5};
\draw[thick] (2.7,-0.1) circle (0.21cm);        %6
\node at (2.7,-0.1) {6};
%\draw[thick] (-0.7,-1.3) circle (0.21cm);       %7
%\node at (-0.7,-1.3) {7};
\draw[thick] (-1.2,-1.9) circle (0.21cm);       %7
\node at (-1.2,-1.9) {7};
\draw[thick] (0.6,-1.7) circle (0.21cm);        %8
\node at (0.6,-1.7) {8};
\draw[thick] (2.3,-1.9) circle (0.21cm);        %9
\node at (2.28,-1.9) {9};
\draw[thick,<->] (-0.99,1) -- (-0.21,1);        %1<->2
\draw[thick,->] (-1.1,0.83) -- (-0.7,0.17);     %1->3
\draw[thick,<->] (1.84,0.87) -- (1.36,0.43);    %4<->5
\draw[thick,->] (-0.65,-0.21) -- (-1.15,-1.7);    %3->7
\draw[thick,->] (-0.495,-0.18) -- (0.495,-1.52);      %3->9
\draw[thick,->] (-0.41,-0.12) -- (2.14,-1.765);   %3->10
\draw[thick,->] (1.05,0.15) -- (-1.05,-1.75);        %5->7
\draw[thick,->] (1.13,0.1) -- (0.67,-1.5);        %5->8
\draw[thick,->] (1.27,0.1) -- (2.23,-1.7);        %5->9
\draw[thick,->] (2.5,-0.17) -- (-0.98,-1.85);        %6->7
\draw[thick,->] (2.54,-0.235) -- (0.76,-1.565);        %6->8
\draw[thick,->] (2.66,-0.306) -- (2.34,-1.694);        %6->9
\draw[gray,thick,dashed] (2.7,-0.1) circle (0.35cm);
\draw[gray,thick,dashed] (-0.6,0.66) circle (1cm);
\draw[rotate=45,gray,thick,dashed] (1.6,-0.65) ellipse (0.9cm and 0.5cm);
\draw[gray,thick,dashed] (0.5,-1.75) ellipse (2.3cm and 0.9cm);
\end{tikzpicture}
\caption{Clusters identified by method in \cite{MONACO201953}}
\label{fig_compare_AEP}
\end{subfigure}
%\vspace{1mm}
\begin{subfigure}[b]{0.6\linewidth}
\centering
\begin{tikzpicture}
\draw[thick] (-1.2,1) circle (0.21cm);          %1
\node at (-1.2,1) {1};
\draw[thick] (0,1) circle (0.21cm);             %2
\node at (0,1) {2};
\draw[thick] (-0.6,0) circle (0.21cm);          %3
\node at (-0.6,0) {3};
\draw[thick] (2,1) circle (0.21cm);           %4
\node at (2,1) {4};
\draw[thick] (1.2,0.3) circle (0.21cm);           %5
\node at (1.2,0.3) {5};
\draw[thick] (2.7,-0.1) circle (0.21cm);        %6
\node at (2.7,-0.1) {6};
%\draw[thick] (-0.7,-1.3) circle (0.21cm);       %7
%\node at (-0.7,-1.3) {7};
\draw[thick] (-1.2,-1.9) circle (0.21cm);       %7
\node at (-1.2,-1.9) {7};
\draw[thick] (0.6,-1.7) circle (0.21cm);        %8
\node at (0.6,-1.7) {8};
\draw[thick] (2.3,-1.9) circle (0.21cm);        %9
\node at (2.28,-1.9) {9};
\draw[thick,<->] (-0.99,1) -- (-0.21,1);        %1<->2
\draw[thick,->] (-1.1,0.83) -- (-0.7,0.17);     %1->3
\draw[thick,<->] (1.84,0.87) -- (1.36,0.43);    %4<->5
\draw[thick,->] (-0.65,-0.21) -- (-1.15,-1.7);    %3->7
\draw[thick,->] (-0.495,-0.18) -- (0.495,-1.52);      %3->9
\draw[thick,->] (-0.41,-0.12) -- (2.14,-1.765);   %3->10
\draw[thick,->] (1.05,0.15) -- (-1.05,-1.75);        %5->7
\draw[thick,->] (1.13,0.1) -- (0.67,-1.5);        %5->8
\draw[thick,->] (1.27,0.1) -- (2.23,-1.7);        %5->9
\draw[thick,->] (2.5,-0.17) -- (-0.98,-1.85);        %6->7
\draw[thick,->] (2.54,-0.235) -- (0.76,-1.565);        %6->8
\draw[thick,->] (2.66,-0.306) -- (2.34,-1.694);        %6->9
\draw[gray,thick,dashed] (2.7,-0.1) circle (0.35cm);
\draw[gray,thick,dashed] (0.6,-1.7) circle (0.35cm);
\draw[gray,thick,dashed] (2.3,-1.9) circle (0.35cm);
\draw[gray,thick,dashed] (-1.2,-1.9) circle (0.35cm);
\draw[gray,thick,dashed] (-0.6,0.66) circle (1cm);
\draw[rotate=45,gray,thick,dashed] (1.6,-0.65) ellipse (0.9cm and 0.5cm);
%\draw[rotate=45,gray,thick,dashed] (-1.8,-0.5) ellipse (0.9cm and 0.5cm);
\end{tikzpicture}
\caption{Topological clusters}
\label{fig_compare_TC}
\end{subfigure}
\caption{Clusters of network in example B}
\end{figure}

\begin{comment}
\begin{figure}% [h]
\centering
\begin{subfigure}[b]{0.48\linewidth}
\centering
\includegraphics[width=1\linewidth]{}
\caption{AEP-based cluster}
\label{fig_compare_AEP}
\end{subfigure}
\hspace{1mm}
\begin{subfigure}[b]{0.48\linewidth}
\centering
\includegraphics[width=1\linewidth]{figures/Topological cluster.pdf}
\caption{Topological cluster}
\label{fig_compare_TC}
\end{subfigure}
\caption{Topological cluster and AEP-based cluster of \textbf{Fig.} \ref{examples_topological_cluster_2}}
\end{figure}
\end{comment}

Using the method proposed in \cite{MONACO201953}, we can obtain the result shown in \textbf{Fig.} \ref{fig_compare_AEP}, which is dependent on the edge weights. We can obtain the topological clusters by applying the algorithm proposed in Section \ref{sec_clustering_algorithm}. Different from the result in \textbf{Fig.} \ref{fig_compare_AEP}, \textbf{Fig.} \ref{fig_compare_TC} shows the topological clusters which are independent of edge weights.

\subsection{Real world example}

\begin{figure*}% [h]
\centering
\includegraphics[width=2\columnwidth]{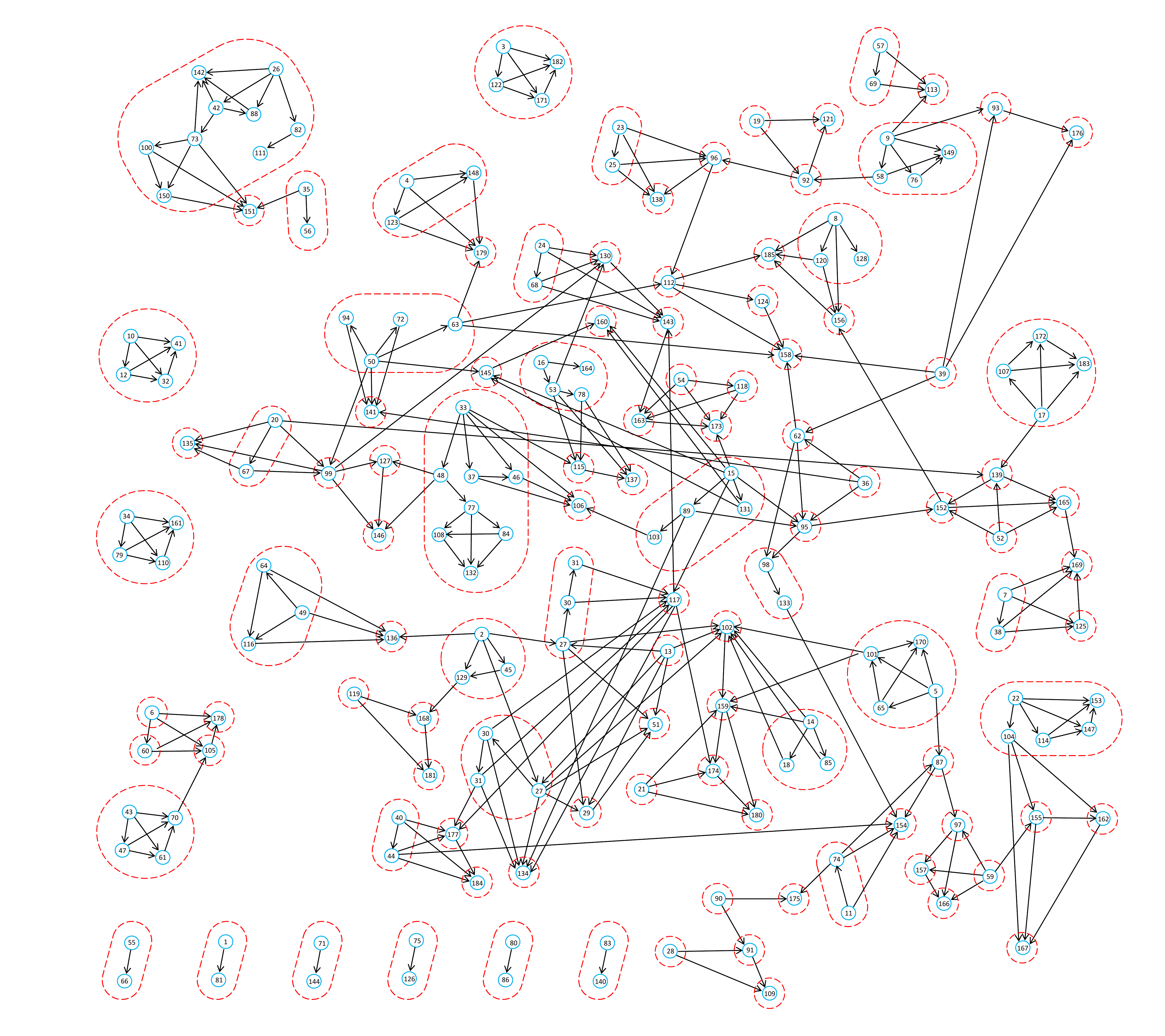}
\caption{Topological clusters of social network in example C}
\label{fig_cluster_social_network}
\end{figure*}

\begin{figure}% [h]
\centering
\includegraphics[width=1\linewidth]{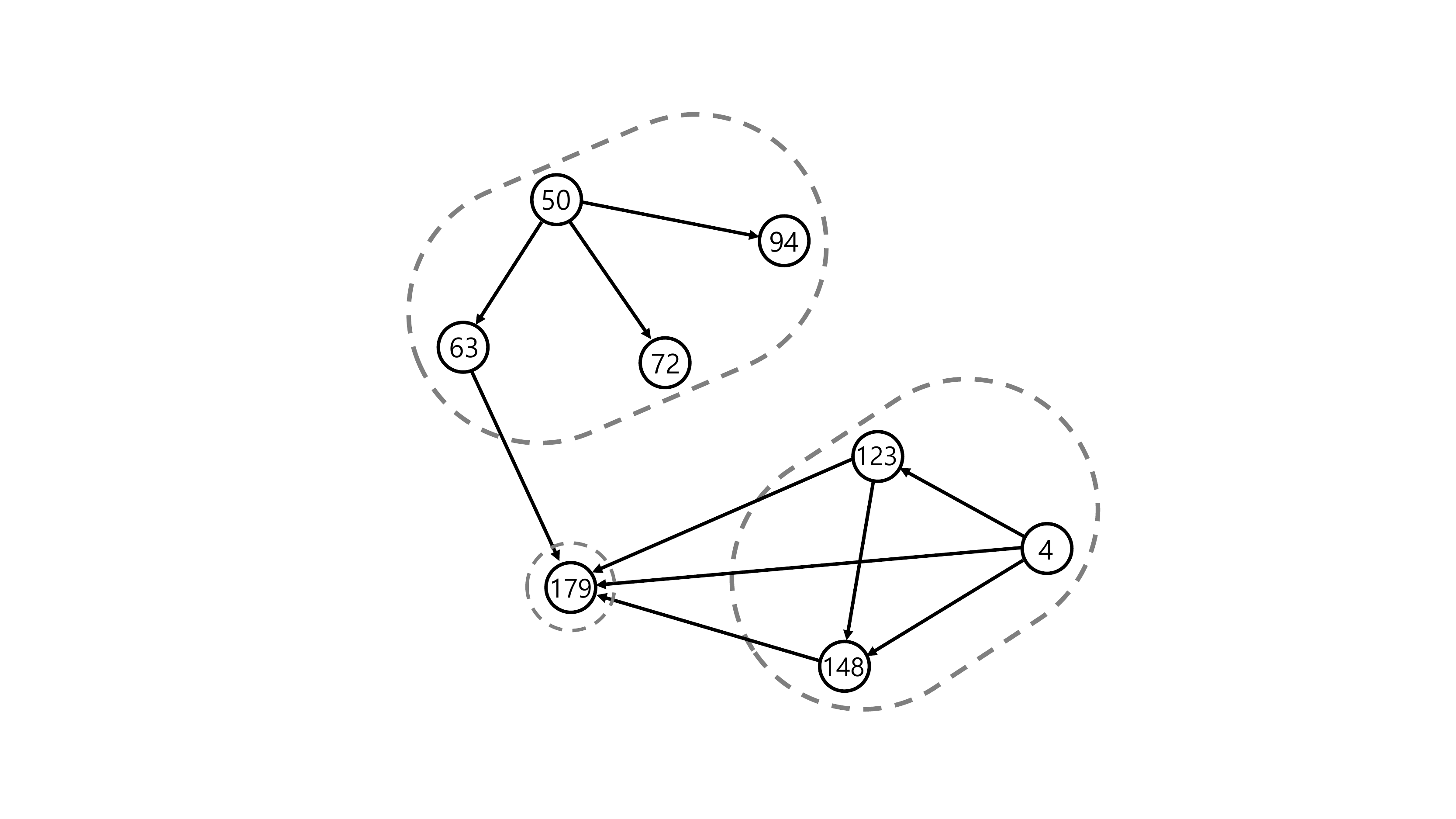}
\caption{Part of social network and topological clusters}
\label{fig_part_cluster_result}
\end{figure}

We apply the topological clustering algorithm to a social network found in \cite{Fire2012student}. The social network, consisting of 185 nodes and 360 edges, shows interaction among students in a course. \textbf{Fig.} \ref{fig_cluster_social_network} shows the results of the topological clustering algorithm for the social network. Solid circles indicate nodes, and dotted circles around nodes indicate topological clusters. The result shown in \textbf{Fig.} \ref{fig_cluster_social_network} clearly identifies the relationship between the topological clusters, which can be useful for analyzing the network.

In \cite{Fire2012student}, the edge weights were set in proportion to the intensity of the students' interactions. The authors of \cite{Fire2012student} measured the intensity by counting the number of online communications among the students, which seems to be ambiguous. Further, the intensity can frequently change. The proposed algorithm can be successfully applied to the network because the notion of the topological clusters is not dependent on the values of edge weights.

Additionally, we can use this result to analyze the opinion propagation in the network. If we assume that the opinion dynamics can be modeled by \eqref{consensus_protocol_2}, we can explain how the opinions propagate among the students. For example, a part of the network are shown in \textbf{Fig.} \ref{fig_part_cluster_result}. Nodes 4 and 50 are the CR nodes of the corresponding topological clusters. The opinion of each node dominates that of each cluster. Regardless of the initial values and the changes in edge weights, nodes 63, 72, and 94 converge to the opinion of node 50. Nodes 123 and 148 converge to the opinion of node 4. Further the opinion of node 179 is between opinions of the two topological clusters.

% In this example, the weights of edges are assigned proportionally to students' interactions on the Internet. In \cite{Fire2012student}, these interactions are counted as the number of visits between students, but this seems to be a very ambiguous method. Amid the difficulty of measuring these weights, the clustering algorithm has the advantage of obtaining a constant result regardless of the weights.

%Based on several regression and machine learning techniques, the authors of \cite{Fire2012student} have tried to predict the success of student in a course. Our topological clusters can provide another analytical method for predicting the success of student. In reality, it will be more complicated, but if it is roughly assumed that the students' grades follow the consensus protocol, students in the same topological cluster will have the same grades. Furthermore, based on the relationship between topological clusters, a more reasonable prediction is possible. For example, in \textbf{Fig.} \ref{fig_part_cluster_result}, the final grade of node 179 converges to the convex hull of nodes belonging to the other two clusters.

%-----------------------------------------------------------------------------------------------

\section{Conclusion}    \label{sec_conclusion}

We studied clustering behavior in a network of single-integrator nodes. We proposed the notion of topological clusters that is independent of edge weights. We also presented a necessary and sufficient condition for topological clusters. We then provided an algorithm to search topological clusters. The algorithm allows us to obtain information about topological clusters from the graph representing the interaction topology. Examples validated the analysis and algorithm.

% There are several future research directions.

%-----------------------------------------------------------------------------------------------

%\section*{ACKNOWLEDGMENT}
%This work was supported by the National Research Foundation of Korea (NRF) grant funded by the Korea government (MSIT) (2022R1A2B5B03001459).

%\nocite{*}% Show all bib entries - both cited and uncited; comment this line to view only cited bib entries;

% \clearpage

%\bibliographystyle{IEEEtran}        % Include this if you use bibtex 
%\bibliography{BIB_TAC_topological_clusters}

\end{document}